\newcommand{\Alphabet}{\mathcal{A}}
\newcommand{\FullShift}{\Alphabet^\mathbb{Z}}
\newcommand{\ShiftSpace} {\mathcal{X}}
\newcommand{\Forbidden} {\mathcal{F}}
\newcommand{\SubShift} {\mathsf{X}_\Forbidden}
\newcommand{\FullSTShift} {\MeasAlphabet ^{\mathbb{Z}^{d+1}}}
\newcommand{\Presentation}{\mathcal{P}}
\newcommand{\Language}{\mathcal{W}}
\newcommand{\msym}{x}
\newcommand{\MeasAlphabet}{\mathcal{A}}
\newcommand{\state}{\msym}
\newcommand{\site} {r\xspace}
\newcommand{\point} {(\site, t)\xspace}
\newcommand{\lattice} {\mathcal{L}\xspace}
\newcommand{\stfield} [2] {\ensuremath{\mathbf{\msym}_{#1} ^{#2}}\xspace}
\newcommand{\stpoint} {\stfield{t}{\site}}
\newcommand{\stpointprime} {\stfield{t'}{\site'}}
\newcommand{\V}{\mathtt{V}}
\newcommand{\radius} {R}
\newcommand{\neighborhood} {\eta}
\newcommand{\word} [3] {\ensuremath{{#1}_{#2:#3}}\xspace}
\newcommand{\plc}{\mathtt{L}^-}
\newcommand{\flc}{\mathtt{L}^+}
\newcommand{\causalfield} [2] {\ensuremath{\mathcal{S}_{#1} ^{#2}}\xspace}
\newcommand{\causalpoint} {\causalfield{t}{\site}}
\newcommand{\past}{\overleftarrow{x}}
\newcommand{\Past}{\overleftarrow{X}}
\newcommand{\future}{\overrightarrow{x}}
\newcommand{\Future}{\overrightarrow{X}}
\newcommand{\hmu}{h_\mu}
\newcommand{\eM}{$\epsilon$-machine}
\renewcommand{\H}{\operatorname{H}}
\newtheorem*{proposition}{Proposition}
\newtheorem*{theorem}{Theorem}
\begin{document}


\title{Algebraic Theory of Patterns as Generalized Symmetries}


\author{Adam Rupe}
\email[]{adamrupe@lanl.gov}
\affiliation{Center For Nonlinear Studies and Computational Earth Science, Los Alamos National Laboratory}

\author{James P. Crutchfield}
\email[]{chaos@ucdavis.edu}
\affiliation{Complexity Sciences Center, Department of Physics and Astronomy, University of California Davis}


\date{\today}

\begin{abstract}
We generalize the exact predictive regularity of symmetry groups to give an
algebraic theory of patterns, building from a core principle of future
equivalence. For topological patterns in fully-discrete one-dimensional
systems, future equivalence uniquely specifies a minimal semiautomaton. We
demonstrate how the latter and its semigroup algebra generalizes translation
symmetry to partial and hidden symmetries. This generalization is not as
straightforward as previously considered. Here, though, we clarify the
underlying challenges. A stochastic form of future equivalence, known as
predictive equivalence, captures distinct statistical patterns supported on
topological patterns. Finally, we show how local versions of future equivalence
can be used to capture patterns in spacetime. As common when moving to higher
dimensions, there is not a unique local approach, and we detail two local
representations that capture different aspects of spacetime patterns. A
previously-developed local spacetime variant of future equivalence captures
patterns as generalized symmetries in higher dimensions, but we show this
representation is not a faithful generator of its spacetime patterns. This
motivates us to introduce a local representation that is a faithful generator,
but we demonstrate that it no longer captures generalized spacetime symmetries.
Taken altogether, building on future equivalence, the theory defines and
quantifies patterns present in a wide range of classical field theories.
\end{abstract}

\keywords{Organization, Structure, Pattern, Computational mechanics, Spacetime,
Translation symmetry, Predictive equivalence}


\maketitle


\section{Patterns in Nature}
Symmetry plays a central role in fundamental physics. When we look out at the
world around us, on the human scale, however, there is a notable lack of exact
symmetries. Cows are not spherical, for instance. The disconnect between
physics at the fundamental level and the human scale is often described in
terms of \emph{spontaneous symmetry breaking}. How and why spontaneous symmetry
breaking occurs so ubiquitously in natural systems are interesting and
challenging questions, but not our concern here. Rather, we are interested in
the question of what structures result from broken symmetries. In particular, a
special case of spontaneous symmetry breaking is \emph{spontaneous
self-organization}. But what \emph{is} organization in the first place? Can we
mathematically formalize it and quantitatively measure it? 

We will use the general rubric \emph{pattern} to refer to the forms of
organization that spontaneously emerge in natural systems. When a system
undergoes a spontaneous symmetry breaking event it self-organizes into some
pattern, either in time, space, or both.

In many canonical examples of \emph{pattern formation} near equilibrium
\cite{Cros93a,Ball99a,Cros09a}---recall the convection roles that emerge in
B\'{e}nard flow \cite{Bena00a,Paul03a} or the spiral waves in the
Belousov-Zhabotinsky chemical reaction \cite{Zhab64a,Epst98a}---a system
undergoes a continuous-to-discrete symmetry breaking bifurcation event. This
occurs when it self-organizes, going from a homogeneous state with trivial
continuous spacetime symmetries to a state with nontrivial discrete symmetries.

Hexagonal convection cells form in a fluid with velocity initially zero
everywhere during a B\'{e}nard instability. Belousov first discovered a
``chemical clock'' with a discrete-time symmetry oscillation that arises from
an initially stationary mixture. Farther from equilibrium, these discrete
symmetries may be further broken during subsequent bifurcations, resulting in
states we may consider ``patterned'', but that have no discernible simple
symmetries remaining. Turbulent fluid flows containing coherent structures
\cite{Holm12a,Hall15a}, such as Jupiter's Great Red spot \cite{Marc88b,Somm88a},
provide many commonly-encountered examples. 

What we would like then, and what the following contributes, is a unified
account of patterns---an account that rigorously and formally describes the
full range of phenomenon including, but going beyond, organizations with exact
symmetries. Given that symmetries are formally captured using the mathematics
of group theory and given the enormous success group theory has had in
formulating physical theory \cite{Tung85a}, we take an algebraic approach to
framing patterns as generalized symmetries.

We start with the simplest setting of discrete one-dimensional spatial systems
(e.g., spin lattices) and show how the semigroup algebra of semiautomata---a
mathematical representation originating in symbolic computation---generalizes
translational symmetry. In doing so, we rigorously clarify subtleties posed by
this generalization---subtleties that have not been previously addressed. Based
on this, we introduce a classification hierarchy in terms of exact symmetries,
partial symmetries, hidden symmetries, and general patterns. We also describe
distinct statistical structures supported on these one-dimensional patterns and
show that stochastic generalizations of semiautomata provide mathematical
representations of these statistical structures.

In addition, we explore generalizations to patterns in higher dimensions.  We
introduce a class of local models that generalize the semiautomata approach for
spatiotemporal systems. Two models in this class are shown to be particularly
useful. Intriguingly, the utility of these models are apparently mutually
exclusive. The first model, introduced previously, can discover hidden
spacetime symmetries and coherent structures \cite{Rupe18a}, such as those in
turbulent fluid flows \cite{Rupe19a}. However, we describe a previously unknown
shortcoming of these models: they are not consistent generators of their
associated spacetime field patterns. The second model, introduced here for the
first time, corrects this flaw and introduces a consistent generator of
spacetime field patterns. Unfortunately, it loses the first model's useful
generalized spacetime symmetries.

These representations' conflicting strengths---generalized spacetime symmetries
and consistent spacetime generation---add new questions and suggest new paths
of inquiry to the enigma of patterns in higher dimensions
\cite{Schmi95a,Lind98a,Lind04a}.

\section{One-Dimensional Patterns}

Abstractly, we can think of a pattern as a \emph{predictive} regularity \cite{Shal98a}:
\begin{quote}
... some object $O$ has a pattern $P$---$O$ has a pattern `represented',
`described', `captured', and so on by $P$---if and only if we can use $P$ to
predict or compress $O$.
\end{quote}
On one extreme, symmetries represent an \emph{exact} predictive regularity. If
the symmetries are known, the pattern can be perfectly predicted at any other
point in time or space. On the other extreme, a completely random system is
entirely devoid of predictive regularity. If every point in spacetime is an
independent, identically distributed random variable, there is no regularity.
And so, knowledge of any part of the system cannot be leveraged to predict
other parts of the system. The notion of pattern that we seek encompasses both
of these extremes and systems in between. A general pattern will be neither
perfectly predicable nor entirely unpredictable---it will be an amalgamation
of regularity and randomness. 

Before proceeding, let us briefly compare and contrast the theory developed
here with the Pattern Theory of Ulf Grenander and colleagues \cite{gren96a}. As
both aim at a general quantitative understanding of what patterns are and
how to discover them in the world, there are many conceptual similarities.
While some quantitative similarities emerge, in particular the use of
nonparametric learning algorithms for hidden Markov models \cite{mumf10a}, most
of the quantitative machinery differs. Pattern Theory is grander in scope
than what is developed here, and so we do not require the very general
constructs of \emph{bonds}, \emph{connectors}, \emph{configurations},
\emph{images}, and the like \cite{gren96a}. Likewise, Pattern Theory does not
employ the machinery of symbolic dynamics, sofic shifts, and predictive
equivalence used here. Thus, our work is complementary to the more general
approach of Pattern Theory.

\subsection{Statistical Field Theories}

The following mainly concerns fully-discrete one-dimensional spatial systems.
These are given as a \emph{shift space} $\ShiftSpace$---a set of indexed
bi-infinite sequences, or \emph{strings}, of symbols taken from a finite
alphabet $\Alphabet$. Before diving into details, let's first take a moment to
compare shift spaces to the analogous setup from statistical mechanics for
analyzing ordered systems. 

A shift space can be thought of as a \emph{topological ensemble}---a set of
strings---in contrast with a statistical ensemble that is a distribution over a
set of strings. This is an abstraction of discrete-spin models in
one-dimension---e.g., $\Alphabet = \{-1, 1\}$ for a standard Ising model.
Rather than specify interactions on the spin lattice and analyze the resulting
statistical field theory, we wish to analyze any pattern present for a given
(topological) ensemble $\ShiftSpace$.

A key distinction between a shift space $\ShiftSpace$ as a topological ensemble
and a spin lattice ensemble in a statistical field theory is that all elements
$x \in \ShiftSpace$ are related to one another through the shift operator
$\sigma$. (Formally described below.) In fact, for the irreducible sofic shifts
we consider, $(\ShiftSpace, \sigma)$ is an ergodic dynamical system. And so,
every member $x$ of the ensemble can eventually be sampled through $\sigma$'s
action. Thus, we equivalently consider (i) $\ShiftSpace$ as an ensemble of
points and $\sigma$ as a deterministic mapping between those points or (ii)
$\ShiftSpace$ as a single infinite lattice and $\sigma$ moves indices on that
lattice. The difference is that of active versus passive transformation.

Spontaneous symmetry breaking in statistical field theories is monitored
through an \emph{order parameter}; such as total magnetization for an Ising
model. In the symmetric ``ordered'' phase, the order parameter has a nonzero
value and, after a symmetry-breaking phase transition, the order parameter
vanishes. For the Ising model, below the transition---below the \emph{critical
temperature}---spins tend to align giving nonzero magnetization. At zero
temperature the model reaches its ground state with all spins aligned. This is
a fully symmetric state with maximal magnetization, corresponding to strings
of the form $\{\ldots, 1,1,1,\dots\}$ or $\{\ldots, -1,-1,-1, \ldots\}$. Above
the critical temperature, this symmetry is fully broken, with zero
magnetization. 

While effective as an approach to thermal spin lattices, such as the Ising
and related lattice models, abstractly quantifying ``order'' with a single
scalar quantity---the order parameter---is far from ideal.

First, for the Ising model, there are only two configurations with maximal
magnetization, as given above. Second, these configurations are maximally
symmetric, with $\sigma^p(x) = x$ for integer $p$. Consider, though,
configurations of the form $\{\ldots, -1,1,-1,1,-1, \ldots\}$. These
configurations are still symmetric, with $\sigma^{2p}(x) =x $, although they
have vanishing order parameter. There are many such symmetric configurations
with zero magnetization: e.g., those of the form $\{\ldots, (-1)^n, (1)^n,
\ldots\}$, with $\sigma^{2np}(x) = x$. More novelly, there are zero
order-parameter configurations that are neither completely symmetric nor
completely random. Third, these symmetric sequences with zero order parameter
are not the ground state and they are not stable under thermal perturbations.
Thus, though singled out by the choice of total magnetization as \emph{the}
order parameter, they are edge cases that will almost never be seen. Finally
and more generally, order parameters in statistical mechanics are not
determined from first physical principles. They must be posited initially and
then proved appropriate.

Similarly, \emph{correlation functions} and \emph{structure factors} are
additional and commonly-employed scalar quantities that capture one or another
notion of order. Conceptually, a system considered highly ordered will surely
be highly correlated. Patterns that emerge on the macroscopic scale correspond
to collective behaviors on the microscale that certainly exhibit nonzero
correlations. However, as with order parameters, there can be many degeneracies
between specific patterns and correlation values. In short, order is something
beyond correlation. A diverging correlation length in an Ising model at the
critical temperature does not signify the presence of intricate patterns and
organization, such as spiral wave patterns in lattice models of excitable media
\cite{Green78a}. To remedy these failings we seek a definition of pattern that
is not scalar.

This is not to banish all scalar quantities. Many, in given settings, can be
insightful \cite{Crut97a,Feld98a,Feld02b,Robi11a}. We will show that the
algebraic \emph{presentations} for topological patterns have a natural
extension to patterns in statistical field theories. Moreover, scalar
quantities of interest, like correlation functions, can be computed in
closed-form from the stochastic presentations. We also return later, briefly,
to discuss generalized order parameters in light of the algebraic theory.

\subsection{Symbolic Dynamics}

We now detail shift spaces and how they quantify topological patterns as
generalized symmetries. Consider a finite alphabet of $n$ symbols $\Alphabet =
\{0, 1, \dots, (n-1)\}$ and (indexed) bi-infinite symbol sequences or
\emph{strings}. The set $\FullShift$ of all possible bi-infinite sequences is
known as the \emph{full-$n$ shift}. A particular sequence $x = \ldots x_{-1}
x_0 x_1 \ldots \in \FullShift$ is described as a \emph{point} in $\FullShift$.
For now, we need not specify whether sequence indices are time coordinates or
space coordinates. In either case, translations are generated via the
\emph{shift operator} $\sigma$ that maps a point $x \in \FullShift$ to another
point $y = \sigma(x)$ whose $i^{\text{th}}$ coordinate is $y_i = x_{i+1}$ for
all $i$. (That is, $\sigma$ shifts every element of $x$ one place to the left.)
Our interest is in patterns as predictive regularity, and the predictions are
made over translations generated by $\sigma$. Thus, we want to capture patterns
in closed, $\sigma$-invariant subsets of $\FullShift$. The subsets are called
\emph{shift spaces} (or \emph{subshifts} or simply \emph{shifts}).

Often one can concisely specify a shift space as the set of all strings that do
not contain a collection of \emph{forbidden words}. A \emph{word} is a finite
block of symbols $a_j \in \Alphabet$ and a point $x$ is said to contain or admit
a word $w = a_0 a_1 \cdots a_k$ if there are indices $i$ and $j = i + k$ such
that $\word{x}{i}{j} = w$; explicitly, $x_i = a_0, x_{i+1} = a_1, \ldots,
x_{j-1} = a_k$. Again, a word is a finite sequence of symbols; a string,
bi-infinite.

For a collection $\Forbidden$ of forbidden words, define $\SubShift$ to be the
subset of strings in $\FullShift$ that do \emph{not} contain any words $w \in
\Forbidden$. A shift space $\ShiftSpace$ is a subset of the full shift
$\FullShift$ such that $\ShiftSpace = \SubShift$ for some collection
$\Forbidden$ of forbidden words \cite{Lind95a}. The \emph{language}
$\Language(\ShiftSpace) = \Forbidden^c$ of a shift space $\ShiftSpace$ is the
collection of all words that occur in some point in $\ShiftSpace$.


If $\Forbidden$ is a finite set, the resulting shift space is called a
\emph{subshift of finite type} \cite{Smal67a} or an \emph{intrinsic} or a
\emph{topological Markov chain} \cite{Parr64a}. A wider class of
finitely-definable shift spaces are the \emph{sofic shifts}. These are the
closure of subshifts of finite type under continuous local
mappings---\emph{$k$-block factor maps} \cite{Weis73}. Though, note that there
are many equivalent definitions of sofic shifts; several of which are given
below as needed. A sofic shift is \emph{irreducible} if, for every ordered pair
of words $u, v \in \Language(\ShiftSpace)$, there is a word $w$ such that $uwv
\notin \Forbidden$. For reasons elaborated on shortly, we define general
discrete one-dimensional patterns as irreducible sofic shifts.


\subsection{Sofic Shifts as Topological Patterns}

To recap, we seek a mathematical specification of patterns in strings that
captures a range of organizations spanning fully-symmetric sequences to
arbitrary (``random'') sequences. Moreover, we wish to identify, from first
principles, an associated algebra that generalizes the group algebra of
symmetries. For physical consistency, we started with shift spaces since they
are shift-invariant subspaces of strings. To fulfill the algebraic requirement,
we now further restrict to sofic shifts as they are shift spaces defined in
terms of a finite semigroup \cite{Weis73}. 

Recall that a \emph{group} is a set of elements closed under an associative and
invertible binary operation with an identity element. In this, they are too
restrictive and impose only exact symmetry. In contrast, \emph{semigroups}
require neither invertibility nor an identity operation. This relaxation is key
to defining generalized patterns, as we laid out above. It permits exact
symmetries but also allows expressing noisy and approximate symmetries.

The elements of a sofic shift's semigroup are words and the binary operation is
word concatenation. For example, the set $\Language(\FullShift)$ of all words in
$\FullShift$ and their concatenation products together form the \emph{free
semigroup}. For example, the product of $u=00$ and $v=11$ in the free semigroup
is the word $w = uv = 0011$. A sofic shift $\ShiftSpace = \SubShift$ is defined
in terms of a finite semigroup $G$ with an absorbing element $e$, whose product
is $ge = eg =e$ for all $g \in G$. The absorbing element $e$ together with the
elements from the alphabet $\Alphabet = \{0, 1, \dots, (n-1)\}$ generates $G$
via single-symbol concatenations.  $G$'s production rules are such that for any
pair of allowed words $u, v \notin \Forbidden$, if their concatenation $w = uv$
contains a forbidden word $f \in \Forbidden$, then their product in $G$ gives
the absorbing element $uv = e$.


A semigroup of word concatenations is also associated with a simple
presentation in the form of a \emph{semiautomaton} finite-state machine
\cite{ginz68a}---the triple $(\Xi, \Alphabet, \mathcal{M})$, where $\Xi =
\{\xi_0, \xi_1, \dots, \xi_m\}$ is a set of \emph{internal states}, $\Alphabet
= \{0, 1, \dots, (n-1)\}$ is the symbol alphabet, and $\mathcal{M} = \{M_0,
M_1, \dots, M_{(n-1)}\}$ is a set of mappings from $\Xi$ into $\Xi$. To be
explicit, we consider \emph{deterministic and fully-specified
semiautomata} for which each $M_a$ is a function---each input has one and
only one output---with domain over the full set $\Xi$ of internal states.
Semiautomata can be usefully depicted as an edge-labeled directed graph.
The vertices represent the internal states in $\Xi$ and for every pair $(\xi_i,
\xi_j)$ such that $\xi_j = M_a(\xi_i)$ there is an edge labeled $a \in
\Alphabet$ that leads from $\xi_i$ to $\xi_j$. For a deterministic and
fully-specified semiautomaton, there is an edge labeled with each symbol in
$\Alphabet$ emanating from every state in $\Xi$. 

\begin{figure*}
\centering
\includegraphics[width = 0.8 \textwidth]{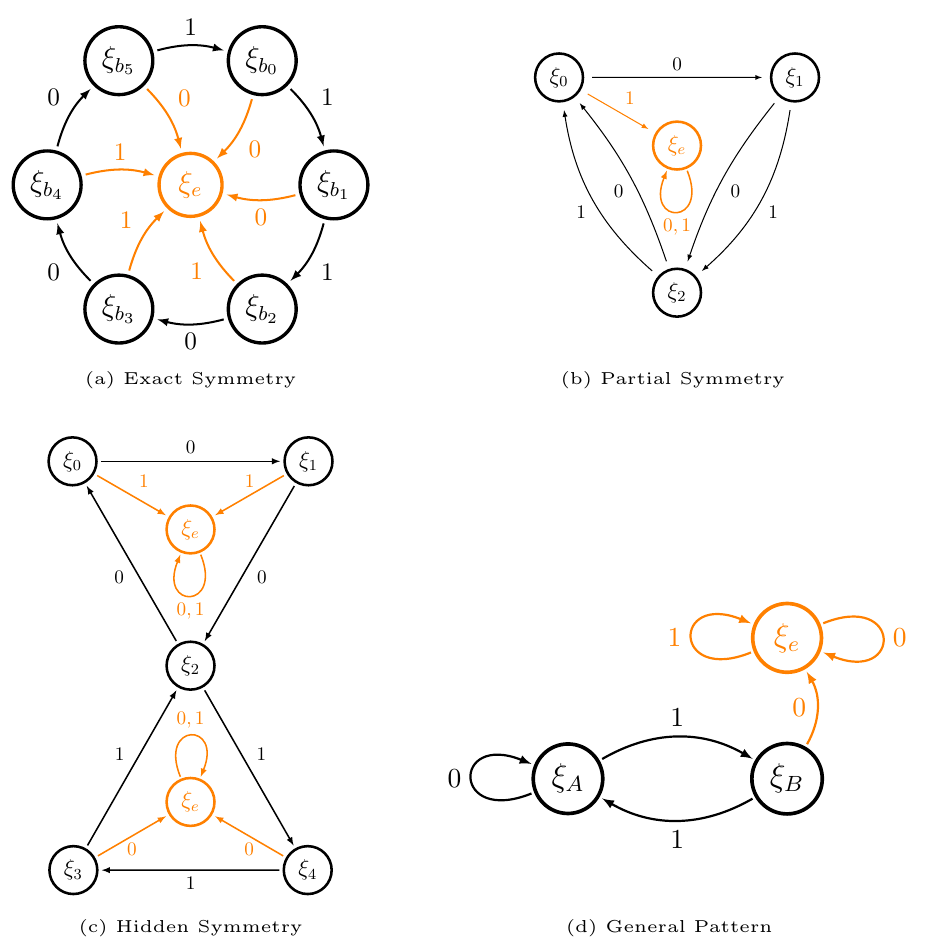}
\caption{Semiautomata presentations $\Presentation(\ShiftSpace)$:
	(a) Exact symmetry shift, (b) partial symmetry shift, (c) hidden symmetry
	shift, and (d) general pattern sofic shift.
	}
\label{fig:machines}
\end{figure*}

A fully-specified semiautomaton directly determines a subshift's algebra from
the free semigroup as follows. For every state $\xi_i \in \Xi$ and any element
$a_0a_1\cdots a_k$ of the the free semigroup $\FullShift$ there is a map
$M_{a_0} \circ M_{a_1} \circ \cdots \circ M_{a_k}$ from $\xi_i$ to another state
$\xi_j \in \Xi$ \cite{ginz68a}. A deterministic and fully-specified
semiautomaton is a \emph{presentation} of a sofic shift if we include an
absorbing ``forbidden'' state $\xi_e \in \Xi$. That is, the mappings associated
with all elements of the free semigroup containing a forbidden word in
$\Forbidden$ lead to the forbidden state. And, all mappings from the forbidden
state return the forbidden state. That is, $M_w(\xi_i) = \xi_e$ for all $\xi_i$
and $w \in \Forbidden$, and $M_a(\xi_e) = \xi_e$ for all $a \in \Alphabet$. See
Fig.~\ref{fig:machines} for presentations of example shifts.

Since sofic shifts are defined by a finite semigroup, every sofic shift can be
presented by a semiautomaton with a finite set of states $\Xi$.
Recall from above that the idea of \emph{compression} is related to our
intuition of pattern. A pattern---a predictive regularity---allows for a
compressed representation of a system's behaviors. Note that sofic shifts and
their presentations provide a finite representation of an ensemble of infinite
strings through their finite semigroup.

Here, we distinguish between three types of sofic-shift semiautomaton
presentation. Appendix~\ref{app:constructions} gives example constructions of
these three types of presentation.

The most straightforward presentation assigns a state $\xi \in \Xi$ to each
element of a semigroup $G$ of $\ShiftSpace$ and fills in the state transitions
$M_a$ using $G$'s production rules \cite{Kitc86a}. While straightforward to
construct, if a $G$ is known, this semiautomaton presentation is not
necessarily minimal, in terms of the state set size $|\Xi|$. To specify a
particular pattern as a sofic shift, it is crucial to have a minimal and unique
presentation associated with the given sofic shift. This also allows extracting
unambiguous quantitative measures of the ensemble of strings, such as measures
of correlation, from the finite presentation.

An important presentation that is minimal and constructible without knowing any
$G$ is $\ShiftSpace$'s \emph{future cover} \cite{Krie84a}, defined below. The
future cover semiautomaton of every irreducible sofic shift $\ShiftSpace$ has a
unique strongly connected component \cite{Fisc75a,Fisc75b}. This irreducible
component is our mathematical representation of patterns as generalized
symmetries. We refer to it as the \emph{canonical machine presentation
$\Presentation(\ShiftSpace)$ of $\ShiftSpace$}. The future cover and its
recurrent component $\Presentation(\ShiftSpace)$ provide a unique, minimal
mathematical representation of $\ShiftSpace$. 

The \emph{future set} (sometimes \emph{follower set}) $F_\ShiftSpace(w)$ of a
word $w \in \Language(\ShiftSpace)$ is the collection of all words $u$ such
that $wu \in \Language(\ShiftSpace)$. Define the \emph{future equivalence
relation} $\sim_F$ on $\ShiftSpace$ as:
\begin{align}
u \; \sim_F \; v \iff F_\ShiftSpace(u) = F_\ShiftSpace(v)
~,
\label{eq:future_equiv}
\end{align}
for all $u, v \in \Language(\ShiftSpace)$.

An important definition of sofic shifts that we use shortly is:
\begin{theorem}[1]
\cite[Theorem 3.2.10]{Lind95a}
A shift space is sofic if and only if it has a finite number of future sets.
\end{theorem}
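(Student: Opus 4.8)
The plan is to prove the two implications separately, in each case by matching future sets with the states of a finite presentation: the forward implication counts subsets of a presenting graph's state set, while the converse builds a presentation directly out of the future sets themselves.

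\emph{(Only if.)} Suppose $\ShiftSpace$ is sofic and fix a finite essential labeled directed graph $G$ on state set $\Xi$ presenting it --- a graph whose bi-infinite walk labelings are exactly the points of $\ShiftSpace$ and every state of which lies on some bi-infinite walk. For $w \in \Language(\ShiftSpace)$ let $T(w)$ be the set of states terminating some finite walk labeled $w$, and let $F_G(\xi)$ be the set of labels of finite walks issuing from $\xi$. The key step is the identity $F_\ShiftSpace(w) = \bigcup_{\xi \in T(w)} F_G(\xi)$, valid because essentialness makes ``occurs in some point'' equivalent to ``labels some finite walk.'' Hence $F_\ShiftSpace(w)$ is determined by $T(w) \subseteq \Xi$, and since $\Xi$ is finite there are at most $2^{|\Xi|}$ distinct future sets.

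\emph{(If.)} Suppose $\ShiftSpace$ has only finitely many future sets and build a semiautomaton $G$ whose states are these sets together with an absorbing state $\xi_e$, with $M_a(F_\ShiftSpace(w)) = F_\ShiftSpace(wa)$ when $wa \in \Language(\ShiftSpace)$ and $M_a(F_\ShiftSpace(w)) = \xi_e$ otherwise. One first checks well-definedness: if $F_\ShiftSpace(u) = F_\ShiftSpace(v)$ then $ua \in \Language(\ShiftSpace) \iff va \in \Language(\ShiftSpace)$, and $F_\ShiftSpace(ua) = F_\ShiftSpace(va)$ since $w' \in F_\ShiftSpace(ua) \iff aw' \in F_\ShiftSpace(u)$. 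Letting $\ShiftSpace_G$ denote the shift space of bi-infinite walk labelings of $G$ avoiding $\xi_e$, a short telescoping induction --- using $a_1 \cdots a_k \in F_\ShiftSpace(w) \iff w a_1 \cdots a_k \in \Language(\ShiftSpace)$ together with the closure of future sets under prefixes --- shows that a finite word labels a walk of $G$ from a non-absorbing state if and only if it lies in $\Language(\ShiftSpace)$. Thus $\ShiftSpace_G$ and $\ShiftSpace$ have the same language, hence are equal, so $\ShiftSpace$ is sofic. (Up to the forbidden state, $G$ is exactly the future cover mentioned above, with $F_\ShiftSpace(\epsilon) = \Language(\ShiftSpace)$ as distinguished start state.)

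The step needing the most care --- and the main obstacle --- is the inclusion $\ShiftSpace \subseteq \ShiftSpace_G$: showing that \emph{every} point of $\ShiftSpace$, not merely every finite word of its language, is the labeling of a bi-infinite walk in $G$. For $x \in \ShiftSpace$ one has finite walks reading the central blocks \word{x}{-n}{n} for every $n$, but these must be stitched together consistently as $n \to \infty$; since $G$ is finite this follows from a standard compactness (K\"onig's lemma / diagonal) argument, which is where finiteness of the state set is genuinely used. I would isolate this as a small lemma rather than fold it silently into the induction; with it, both inclusions hold and the biconditional follows.
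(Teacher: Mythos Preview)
The paper does not give its own proof of this theorem; it simply cites it as Theorem~3.2.10 of Lind and Marcus and uses it as a black box to conclude that sofic shifts have finitely many future-equivalence classes. Your argument is correct and is essentially the standard proof found in that reference: the forward direction bounds the number of future sets by $2^{|\Xi|}$ via the terminal-state map $w \mapsto T(w)$, and the converse builds the follower-set graph (what the paper calls the future cover) directly from the future sets. Your flag on the compactness step for $\ShiftSpace \subseteq \ShiftSpace_G$ is appropriate, though in most treatments it is absorbed into the standing fact that two shift spaces with the same language coincide---itself a consequence of shift spaces being closed---rather than proved by an explicit K\"onig-lemma argument. Either packaging is fine.
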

Therefore, sofic shifts have finitely-many equivalence classes, denoted $[
\cdot ]_F$, and these equivalence classes plus the absorbing forbidden state
are the internal states $\Xi$ of the future cover semiautomaton.

The mappings $M_a$ that give the state transitions are defined from the allowed
concatenations that do not contain a forbidden word in $\Forbidden$. That is,
each state $\xi_i \in \Xi \setminus \{\xi_e\}$ is an equivalence class $[u]_F$
and, for each symbol $a \in \Alphabet$, the concatenation $v = ua \neq e$
belongs to the equivalence class $[v]_F$ assigned as state $\xi_j \in \Xi$,
giving $M_a(\xi_i) = \xi_j$. Note that this is independent of the choice $u \in
[u]_F$ and that $[v]_F$ in some cases may be equal to $[u]_F$. This gives a
self-edge transition in the semiautomaton: $M_a(\xi_i) = \xi_i$. If $ua = e$,
then $M_a (\xi_i)$ maps to the forbidden state $\xi_e$.

This natural transition structure follows from future equivalence and it leads
to a very important property of the future-cover semiautomata. They are
called \emph{unifilar} \cite{Ash65a} in information theory, equivalently also
called \emph{right resolving} in symbolic dynamics \cite[Corollary
3.3.19]{Lind95a} or \emph{deterministic} in automata theory \cite{Hopc06a}. A
fully-specified semiautomata is unifilar if for every internal state $\xi_i$
and every word (element of the free semigroup) $a_0 a_1\cdots a_k$ the map
$M_{a_0} \circ M_{a_1} \circ \cdots M_{a_k}$ leads to one and only one internal
state $\xi_j$. (It may be that $j$ = $i$.) 

Unifilarity is the defining property of a \emph{predictive} semiautomaton.
Since the goal is to formalize patterns as predictive regularities, this is an
important point to stress. By way of contrast, first note that any presentation
of a sofic shift $\ShiftSpace$, whether unifilar or not, is a \emph{generator}
of $\ShiftSpace$. Every string in $\ShiftSpace$ can be generated by following
the symbol-labeled transitions of the presentation and no forbidden words can
be generated. Thus, being generative can be thought of as the baseline property
of any model of a sofic shift. Prediction is an additional capability beyond
generation that arises from a presentation being unifilar. 

Unifilarity establishes a presentation as predictive in the following way. For
topological patterns, the task of prediction is to establish what may happen in
the future, given what has happened in the past. Specifically, given a word
$w$, what words are allowed to follow in the shift space? That is, what is
$w$'s future set? Due to the implied determinism, each internal state of a
unifilar presentation is uniquely specified by the word $w$ leading to that
state. Notably, this is not guaranteed for an arbitrary generator of a shift.
Furthermore, in a unifilar presentation every subsequent word again uniquely
leads to another internal state. It is straightforward to see that the future
cover is a predictive presentation: By definition, its internal states are
\emph{future separated}---the set of all words that may follow from each
internal state is unique. 

To establish that sofic shifts and their canonical machine presentations
express patterns as generalized symmetries, it is helpful to first describe how they capture exact translation symmetries of symbolic sequences.

\subsection{Exact Symmetries}

A string $x$ has a discrete translation symmetry if $\sigma^p(x) = x$, where
the minimal such $p \in \mathbb{N}$ is the symmetry's \emph{period}. The
symmetry group is the set $\{\sigma^{np}\ : \; n \in \mathbb{N}\}$ with
$\sigma^{ip} \sigma^{jp} = \sigma^{kp}$ where $k = i+j$. Since here $p$ is
finite, the action of $\sigma$ on $x$ produces a compact shift-invariant
subspace of $\FullShift$. Therefore, translation symmetric strings are shift
spaces $\ShiftSpace$. 

We now show that the shift space $\ShiftSpace$ of translation symmetric strings
is sofic. The action of the symmetry group used to define $x$ is determined by
the shift operator $\sigma$, while the action of sofic semigroups is word
concatenation.

To connect these, consider \emph{windows} $\word{x}{i}{j}$ ($i <
j)$ that return the word $w$ from coordinates $i$ through $j$ in $x$. For $(i <
j < k)$, if $\word{x}{i}{j} = u$ and $\word{x}{j}{k} = v$, then $\word{x}{i}{k}
= w =uv$ gives the concatenated window. Recall that $\sigma$ shifts indices in
$x$ so that $\sigma^k(x)_i = x_{i+k}$. If we have a word $u \in
\Language(\ShiftSpace)$, there is some $(i,j)$ such that $\word{x}{i}{j} = u$.
Then the allowed concatenations $uv \neq e$ are determined by the shift
operator $\sigma$, since we can write $v = \word{x}{j}{j+k} = (x_j) \sigma
(x)_j \sigma^2(x)_j \cdots \sigma^k (x)_j$. All such concatenations determine
$\ShiftSpace$'s semigroup $G$. For translation symmetric strings with
$\sigma^p(x) = x$, $\sigma^p(x)_i = x_i$. And so, intuitively, there is only a
finite number of elements in $G$. Therefore, $\ShiftSpace$ for translation
symmetric strings is sofic.

%

We now show this explicitly by constructing the canonical machine presentation
$\Presentation(\ShiftSpace)$ with a finite number of states. 

\begin{proposition}
Translation-symmetric strings, $\sigma^p(x) = x$ for some $p \in \mathbb{N}$, together with their shifts $y = \sigma^n(x)$ for all $n \in \mathbb{N}$, form an irreducible sofic shift space. 
\end{proposition}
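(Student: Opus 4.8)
The plan is to realize the proposed space as the (finite) shift orbit of a single periodic point and then verify the three required properties in turn: that it is a shift space, that it is sofic, and that it is irreducible. First I would reduce to a finite point set. Since $\sigma^p(x) = x$, the point $\sigma^n(x)$ depends only on $n \bmod p$, so $\{\sigma^n(x) : n \in \mathbb{N}\}$ has at most $p$ elements, and because $\sigma^{-1}$ acts on this orbit as $\sigma^{p-1}$ it already equals the full two-sided orbit, which I call $\ShiftSpace := \{x, \sigma(x), \dots, \sigma^{p-1}(x)\}$. Writing the period block $w := x_0 x_1 \cdots x_{p-1}$, each of these points is the bi-infinite repetition of a cyclic rotation of $w$; equivalently, the language $\Language(\ShiftSpace)$ is exactly the set of subwords of $w^\infty$ (the doubly-infinite repetition $\cdots www \cdots$).

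Next I would check that $\ShiftSpace$ is a shift space. It is $\sigma$-invariant by construction, and it is closed, being a finite subset of the metrizable full shift $\FullShift$. Alternatively, and more in the spirit of the paper, one takes $\Forbidden := \Alphabet^* \setminus \Language(\ShiftSpace)$ and verifies $\SubShift = \ShiftSpace$: any string all of whose length-$(p{+}1)$ windows occur in $w^\infty$ must satisfy $x_{i+p} = x_i$ for every $i$, hence is a rotation of $w^\infty$ and so lies in $\ShiftSpace$.

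Soficity then follows from Theorem~1 (a shift space is sofic iff it has finitely many future sets). I would bound the number of future sets $F_\ShiftSpace(u)$: if $|u| \ge p$, the first $p$ symbols of $u$ determine the cyclic rotation of $w$ at which $u$ sits in any point of $\ShiftSpace$ containing it, hence the phase modulo $p$, hence the entire continuation, so $F_\ShiftSpace(u)$ is one of at most $p$ sets; and there are only finitely many words with $|u| < p$. Thus finitely many future sets in all. Equivalently, one can simply exhibit a finite presentation: $p$ internal states in a cycle, with the edge out of $\xi_i$ labelled $x_i$ and leading to $\xi_{(i+1)\bmod p}$, and every other symbol leading to the absorbing state $\xi_e$; this deterministic, fully-specified, unifilar semiautomaton generates precisely $\ShiftSpace$, which by definition makes $\ShiftSpace$ sofic (the recurrent part collapses to the minimal period when $w$ is itself periodic).

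Finally, for irreducibility, given $u, v \in \Language(\ShiftSpace)$, the word $u$ ends at some phase $a$ and $v$ begins at some phase $b$ in $w^\infty$; letting $s$ be the word read along the period block from phase $a$ to phase $b$, the concatenation $usv$ is a subword of $w^\infty \in \ShiftSpace$, hence $usv \in \Language(\ShiftSpace)$, i.e.\ $usv \notin \Forbidden$ — equivalently, the recurrent part of the presentation graph above is a single cycle, hence strongly connected. I expect the only genuinely delicate step to be soficity, specifically making the count of future sets airtight when $p$ is not the minimal period, so that words shorter than $p$ may already have a stabilized future and one must avoid both over- and under-counting; the length-$p$ window argument settles this, and everything else is routine.
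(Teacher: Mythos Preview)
Your proposal is correct and follows essentially the same route as the paper: both arguments exploit the tiling $x=\cdots bbb\cdots$ by the period block to exhibit a finite cyclic presentation (or equivalently, finitely many future sets) and then invoke Theorem~1. If anything, your treatment is more complete---you explicitly verify the shift-space axioms and give a direct argument for irreducibility, whereas the paper's proof focuses on constructing the future cover step by step via the equivalence classes $[b]_F, [bb_1]_F,\dots$ and leaves irreducibility implicit in the cyclic graph structure.
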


\begin{proof}
First, note that
$\sigma^p(x) =x $ implies $x$ can be written as a tiling $\cdots bbbbb \cdots$,
where $b$ is a word of length $p$. Pick any $x_i$ as the first symbol $b_0$ in
the word $b$. Then $\sigma(x)_i = b_1$, $\sigma^2(x)_i =b_2$, ..., with
$\sigma^{p-1} = b_p$. Applying $\sigma$ one more time gives $\sigma^p(x)_i =
x_i = b_0$, arriving at the next tile $b$.

Second, using this observation we create $\Presentation(\ShiftSpace)$ using
$p+1$ internal states, where we have a state $\xi_{b_i}$ for each symbol
$b_i \in \Alphabet$ in the tile $b$ (there are $p$ of these) and one absorbing
state $\xi_e$. Let $\xi_{b_0}$ be the future set equivalence class $[b]_F$ of the word
$b$. Due to $x$'s exact symmetry, there is one and only one symbol $a \in
\Alphabet$ such that $ua \neq e$ for $u \in [b]_F$; namely $b_1$. Therefore,
let $M_{b_1}(\xi_{b_0}) = \xi_{b_1}$, and then all other $M_a$ map to $\xi_e$,
for $a \neq b_1$. Now, $\xi_{b_1} = [bb_1]_F$ and $b_2$ is the only symbol
such that $ua \neq e$ for $u \in [bb_1]_F$. Thus, $M_{b_1}(\xi_{b_1}) =
\xi_{b_2}$ and all other $M_a$ map to $\xi_e$, for $a \neq b_2$. Repeat this
argument for all generators in $b$, with $\xi_{b_p} = [bb_1b_2 \dots b_p]_F =
[bb]_F$.

Third, as with $\xi_{b_0}$, the only symbol that can follow is $b_1$ and we
repeat the full argument again, where each $b_i$ sequentially follows.
Therefore, $[bb]_F$'s future set equals that of $[b]_F$ and so $[bb]_F=[b]_F$.
Thus, the only transition from $\xi_{b_p}$ that is allowed returns to the
original state $\xi_{b_0}$. This completely specifies
$\Presentation(\ShiftSpace)$ with a finite number of states $\Xi = \{\xi_{b_0},
\dots, \xi_{b_p}, \xi_e\}$. Given Theorem (1) above, one concludes that the shift space
$\ShiftSpace$ for a translation symmetric sequence is sofic. 
\end{proof}

Figure~\ref{fig:machines}(a) gives an example of $\Presentation(\ShiftSpace)$
for the set of translation symmetric strings $\cdots000111000111000\cdots$ with
$b=111000$. For visual clarity it omits the self-loops on state $\xi_e$.

Recall that $\Presentation(\ShiftSpace)$ is the irreducible component of the
future cover, so for a given shift there may be additional states beyond those
just described. However, these are equivalence classes for words of length less
than $p$, since we started with $[b]_F$. Since $p$ is finite, there are only
finitely many additional equivalence classes. These correspond to transient
(nonrecurrent) states of the future cover semiautomaton.

In Fig.~\ref{fig:machines}(a)'s example, knowing $x_i = 0$ does not fully
specify which state $\Presentation(\ShiftSpace)$ is in, since three states
$\xi_3$, $\xi_4$, and $\xi_5$ correspond to observing the symbol $0$. The
additional transient states of the future cover specify how to
\emph{synchronize} to $\Presentation(\ShiftSpace)$'s states from the generators
$a \in \Alphabet$ (single-symbol words).

Having constructed the canonical machine presentation
$\Presentation(\ShiftSpace)$, we can further relate $\ShiftSpace$'s semigroup
action to $x$'s translation symmetry group. For each state $\xi_{b_i} \in \Xi$
there is one and only one transition that does not lead to the absorbing
forbidden state $\xi_e$. That is, only one generator $a \in \Alphabet$ can be
concatenated to the words $u \in [u]_F = \xi_{b_i}$. Similarly, if we consider
a word $u \in \xi_{bi}$ as the window $u = \word{x}{i}{j}$, then a unit shift
by $\sigma$ reveals one and only one new symbol at index $j$ in
$\word{\sigma(x)}{i}{j}$. Therefore, ignoring the absorbing state and
transitions to it, the graph of $\Presentation(\ShiftSpace)$ is a cyclic graph
with period $p$: Every $p$-length path from $\xi_i$ returns to $\xi_i$ for all
$\xi_i \in \Xi \setminus \{\xi_e\}$. Thus, the permutation symmetries of this
(edge-labeled) graph correspond to elements of $x$'s translation symmetry
group.

In Fig.~\ref{fig:machines}(a)'s example, the state labeled $\xi_{b_0}$
corresponds to the start of the tile $b=111000$, but we could equivalently use
$\xi_{b_0}$ as the start of tile $000111$. Furthermore, the internal states
have a functional meaning. They are the elements of the quotient group of
the translation symmetry---counters that track the symmetry's phase.

It must be emphasized that sofic shifts and their semigroups do not formally
generalize such exact symmetries in the obvious way. That is, the semigroup of a
sofic shift for exact symmetry strings does not become a symmetry group. $G$'s
absorbing semigroup element $e$ is still required for an exact symmetry sofic
shift $\ShiftSpace$. From our construction of translation symmetric
$\Presentation(\ShiftSpace)$ we see that, for every internal state $\xi_{b_i}
\neq \xi_e$, there is one and only one transition that does not lead to $\xi_e$.
This makes it clear that exact symmetries are a highly restrictive form of
pattern. By representing exact symmetries using sofic shifts and their machine
presentations, though, it is now straightforward to generalize by relaxing the
restrictions that impose exact symmetries.

\subsection{Generalized Symmetries}

Ignoring $\xi_e$, a sofic shift $\ShiftSpace$ whose machine presentation is not
cyclic then represents a pattern as a generalized symmetry. We can again either
consider $\ShiftSpace$ as an ensemble of strings or one infinite sequence that
possesses a generalized symmetry described by $\ShiftSpace$'s semigroup, which
is well represented by the machine presentation $\Presentation(\ShiftSpace)$.
By removing the restriction of a cyclic graph in
$\Presentation(\ShiftSpace)$---that imposed the perfect regularity of $x =
\cdots bbbbb \cdots$---we now can capture a much wider class of patterned
strings with approximate or partial regularities. 

Consider the extreme case of the full shift $\FullShift$ that has no
regularity. There are no products $uv = e$ in $G$ for $\FullShift$, and so
there are no restrictions on its words. The algebra of the full shift is the
free semigroup. Its machine presentation $\Presentation(\FullShift)$ is a
single state with all $M_a$ mapping that one state back to itself---i.e., all
transitions are self transitions. Since all words can be concatenated to each
other, they all belong to a single future equivalence class.

We interpret $\FullShift$'s complete lack of regularity to be a null pattern.
Analogously, the opposite extreme of total regularity with strings of the form
$x = \cdots aaaaa \cdots$ for $a \in \Alphabet$ is also a null pattern with a
single-state (again, ignoring $\xi_e$) machine presentation that has a single
self-transition. The null pattern, in these cases, has zero memory---the
logarithm of $|\Xi \setminus \{\xi_e\}|$ vanishes. While both extremes at first
seem to be polar opposites, recall our goal is that ``pattern'' represents a
\emph{predictive} regularity and this is lacking in both cases. The full shift
is completely ``random'' and thus unpredictable. Whereas, for trivially
translation symmetric strings $x=\cdots aaaa \cdots$, the future is always the
same. There is nothing to predict.


Between the complete regularity of exact symmetries and lack of predictive
regularity in null patterns, we identify several categories of partial
predictive regularity.

First, note that for translation symmetric strings $\sigma^p(x)_i = x_i$ for all $i$.

Second, there are string classes for which $\sigma^p(x)_i = x_i$ for only
\emph{some} $i$. We call these \emph{partial symmetries}. A particular
case of partial symmetries are \emph{stochastic symmetries}. For simplicity,
consider binary sequences with $\Alphabet = \{0,1\}$, and let $\omega$ denote a
``wildcard'' that can be either $0$ or $1$ \cite{Marq11a}. Sofic shifts with
stochastic (partial) symmetries are fully translation symmetric after making
wildcard substitutions. For example, we can specify a sofic shift with
sequences of the form $x = \cdots \omega \omega \; 0 \; \omega \omega \; 0 \;
\omega \omega \; 0 \; \cdots$, say. Examples of such strings are $\cdots 11 \;
0 \; 01 \; 0 \; 00 \; 0 \; 10 \; 0 \; 11 \; 0 \cdots$, where spaces help
emphasize the ``fixed'' $0$s that are the scaffolding of the partial symmetry.
Note that the canonical machine presentation $\Presentation(\ShiftSpace)$ for
such stochastic symmetries are also cyclic graphs, as shown in
Fig.~\ref{fig:machines}(b).

Third, recall that if the canonical presentation $\Presentation(\ShiftSpace)$
is a cyclic graph, every $p$-length path from $\xi_i$ returns to $\xi_i$ for
all $\xi_i \in \Xi \setminus \{\xi_e\}$. Similar to how we generalized from
exact to partial symmetries, we define \emph{hidden symmetries} for which only
\emph{some} states $\xi_i \in \Xi \setminus \{\xi_e\}$ return to themselves on
all $p$-length paths in $\Presentation(\ShiftSpace)$. We exclude the case of
$p=1$, so that self-loops do not count as a hidden symmetry.
Figure~\ref{fig:machines}(c) shows an example with $\xi_2$ as the symmetric
state. The canonical machine presentation specifies a sofic shift consisting of
arbitrary arrangements of blocks $a=000$ and $b=111$; e.g., $x = \cdots aababba
\cdots = \cdots 000000111000111111000 \cdots$. The exact symmetry shift in
Fig.~\ref{fig:machines}(a) is the special case of the symmetric tiling $x =
\cdots abababa \cdots$.


Finally, Fig.~\ref{fig:machines} (d) gives an example of a general nonnull
pattern that is not an exact, partial, or hidden symmetry. This is the
well-known Even Shift \cite{Weis73,Kitc86a}---the set of binary strings in
which only even blocks of $1$s bounded by $0$s are allowed. This $\ldots 0
1^{2n} 0 \ldots$ pattern extends to arbitrary lengths, despite being specified
by only two internal states. While there are no states in the presentation
$\Presentation(\ShiftSpace)$ that always return to themselves after $p \neq 1$
transitions, there is still predictive regularity. In particular, if a $1$ is
seen after a $0$, it is guaranteed that the next symbol will be a $1$. This is
specified by $\xi_B$ having only one allowed transition to $\xi_A$ on a $1$.
Appendix \ref{app:EvenShift} discusses this example in more detail, along with
its semigroup and three semiautomaton presentations.

Before moving to probabilistic patterns represented by measure sofic shifts, we
note that Krohn-Rhodes theory \cite{Kroh65a,Rhod71a} was the first to connect
finite automata with a semigroup algebra. Moreover, it showed that finite
semigroups and their corresponding automata naturally decompose into simpler
components, including finite simple groups. This is yet another perspective
showing finite automata and their semigroup algebra capture patterns as
generalized symmetries. Further exploration of the connection between
Krohn-Rhodes theory and the perspective developed here is left for future work. 
One important difference to note is that their approach did not address
statistical or noisy patterns, as the following now does.

\subsection{Statistical Patterns Supported on Sofic Shifts}

The exposition on sofic shift patterns did not invoke probabilities over
symbols, words, or strings. Shift spaces are not concerned with the
probability of a word occurring, only whether a word can possibly occur or not.
This is why we referred to sofic shifts as \emph{topological} patterns. We just
saw that exact symmetries are given as sofic shifts and so are topological
patterns. Recall that our key motivation for showing this was to argue for sofic
shifts as a mathematical formalism that captures a notion of (topological)
pattern that greatly expands exact symmetry to generalized symmetry. However,
as we now describe, we can generalize further to formalize \emph{statistical}
patterns that are supported on sofic shifts. Doing so provides a direct link
with the more familiar statistical measures of order and organization used in
statistical mechanics, such as correlation functions. 

Our goal is to build a probability space on top of a shift space. The key
property of this probability space is that words are assigned positive
probability if and only if they are allowed in the shift space. This is
accomplished through the use of cylinder sets as the sigma algebra on a shift
space $\ShiftSpace$~\cite{Boyl09}. For a shift space $\ShiftSpace$ and a
length-$n$ word $\omega$, the \emph{cylinder set} $C_i(\omega)$ is defined as:
\begin{align*}
    C_i(\omega) := \{x \in \ShiftSpace \; : \; x[i: i+n-1] = \omega\}
    ~.
\end{align*}
Naturally then, a probability measure $\mu$ is assigned such that the
probability of a word $\omega$ occurring is given as:
\begin{align*}
    \Pr(\omega) = \mu\bigl(C_i(\omega)\bigr)
    ~.
\end{align*}
By definition, $\Pr(\omega) = 0$ if $\omega \in \Forbidden$, as the associated cylinder sets will be empty for forbidden words. 

Such a probability measure defines a \emph{stationary stochastic process} over the shift space $\ShiftSpace$ if it is shift-invariant, such that the probability of a word is independent of the index $i$ in $C_i(\omega)$, and each word satisfies prefix and suffix marginalization: 
\begin{align*}
\mu(w) & = \!\!\!\!\!\!\! \sum_{\{a: aw \in \Language(\ShiftSpace) \}}
	\!\!\!\!\!\! \mu(aw) \\
\end{align*}
and:
\begin{align*}
\mu(w) & = \!\!\!\!\!\!\! \sum_{\{a: wa \in \Language(\ShiftSpace) \}}
	\!\!\!\!\!\!\! \mu(wa)
  ~.
\end{align*}
This is the Kolmogorov extension theorem that guarantees the finite-dimensional
word distributions consistently define a stochastic process \cite{Dudl02a}. We
only consider shift-invariant measures and, so without loss of generality, we
simply use $C_0(\omega)$ for the cylinder sets. 

Crucially, the semigroup algebra and canonical machine presentation machinery for topological patterns have natural generalizations to stochastic patterns, as we now describe. 

Following Ref.~\cite{Furs60a} in the context of shift spaces, a (free)
\emph{stochastic semigroup} is a function $F$ defined on the free semigroup
$S$, with identity element $\eta$ (the empty symbol), that satisfies the
following properties (\cite[Definition 4.29]{Boyl09}):
\begin{enumerate}
      \setlength{\topsep}{-2pt}
      \setlength{\itemsep}{-2pt}
      \setlength{\parsep}{-2pt}
\item $F(\eta) = 1$,
\item $F(s) \geq 0$ for each $s \in S$ and $F(a) \geq 0$ for each
	$a \in \Alphabet$, and
\item $\sum_{a_i \in \Alphabet} F(a_i s) = \sum_{a_i \in \Alphabet} F(s a_i)$
	for each $s \in S$.
\end{enumerate}

For a sofic shift $\ShiftSpace$, we define a stochastic semigroup on
$\ShiftSpace$ using the semigroup $G$ defined above, with absorbing element $e$
corresponding to forbidden words. We simply set $F(e) = 0$. Then, a
shift-invariant measure $\mu$ satisfying Kolmogorov extension on a sofic shift
$\ShiftSpace$ with semigroup $G$ forms a stochastic semigroup as follows. For
all elements of the free semigroup $a_0a_1\cdots a_k$---i.e., for every word
$\omega = a_0a_1\cdots a_k$---define $F$ such that $F(\eta) = 1$, $F(\omega) =
0$ if and only if $\omega \in \Forbidden$ (equivalently, $a_0a_1\cdots a_k = e$
in the semigroup $G$), and otherwise $F(\omega) = \mu\bigl(C_0(\omega)\bigr)$.
Such a measure $\mu$ is called a \emph{sofic measure} \cite{Boyl09}. 

In this way, sofic measures allow for statistical structure on top of a sofic
shift $\ShiftSpace$, while maintaining an algebraic structure related to
$\ShiftSpace$'s semigroup algebra. More importantly, there is a canonical
machine presentation associated with sofic measures, analogous to the canonical
machine presentation of sofic shifts. As in the topological case for sofic
shifts, the canonical machine presentation of sofic measures provides the
mathematical formulation of \emph{statistical patterns}. 

Recall that the future cover semiautomaton---the canonical topological machine
presentation---is defined from Eq.~(\ref{eq:future_equiv})'s future equivalence
relation. The canonical stochastic machine presentation is defined through a
stochastic generalization of future equivalence, called \emph{predictive} or
\emph{causal} equivalence, defined on semi-infinite words. Each index $i$
partitions a sequence into a semi-infinite \emph{past} $\past_i = \{x_j\}, j
\leq i$, and semi-infinite \emph{future} $\future_i = \{x_k\}, k > i$. In the
topological case, two pasts are considered future-equivalent if they have the
same future---the same \emph{set} of futures that follow. In the stochastic
case, two pasts are \emph{predictively} or \emph{causally equivalent} if they
have the same \emph{distribution} $\Pr(\Future | \Past)$ over futures
conditioned on the past:
\begin{align}
\past_i  \! \sim_\epsilon  \! \past_j \iff
	\Pr(\Future | \Past \! = \! \past_i) = \Pr(\Future | \Past \! = \! \past_j)
    .
\label{eq:causal_equiv}
\end{align}
Just as the future equivalence relation $\sim_F$ defines the unique minimal
semiautomaton presenting a sofic shift, the causal equivalence relation
$\sim_\epsilon$ defines the unique minimal \emph{hidden Markov chain} (HMC)
that presents a sofic measure and its stationary stochastic process
\cite{Crut88a,Shal98a}.

Speaking simply, a hidden Markov chain $(\Xi, \Alphabet, \mathcal{T})$ is a
semiautomaton whose deterministic symbol-labeled transitions $\mathcal{M} =
\{M_0, M_1, \ldots, M_{(n-1)}\}$ are replaced by symbol-labeled transition
probabilities $\mathcal{T} = \{T^0, T^1, \ldots, T^{(n-1)}\}$, where $T^a_{\xi,
\xi'}$ is the probability of transitioning from state $\xi$ to $\xi'$ on the
symbol $a \in \Alphabet$. 

Paralleling the topological setting, the canonical stochastic machine
presentation is a hidden Markov chain whose internal states $\Xi$---the
\emph{predictive} or \emph{causal states}---are the equivalence classes of Eq.
(\ref{eq:causal_equiv})'s causal equivalence relation. The symbol-labeled
transitions are then defined through the one-step conditional distributions
$\Pr(X^1 = a | \Past=\past_i)$. For a given causal state $\xi$, we write
$\Pr(\Future | \Past = \xi)$ since by definition each past $\past_i$ in the
equivalence class $\xi = [\past_i]_\epsilon$ has the same predictive
distribution.

The transition probability $T^a_{\xi, \xi'}$ is then given as $\Pr(X^1 = a |
\Past = \xi)$, with $\xi' = [\past_i a]_\epsilon$, where $\past_i a$ is the new
past given by concatenating the observed symbol $a$ onto the current past
$\past_i$. This follows from \emph{unifilarity}~\cite{Shal98a}: in the
stochastic setting for each internal state $\xi \in \Xi$ and symbol $a \in
\Alphabet$ there is at most one internal state $\xi'$ such that $T^a_{\xi,
\xi'} > 0$. It then follows that $[\past_i a]_\epsilon$ is the same for all
$\past_i \in \xi$.


Historically, the canonical stochastic machine presentation $(\Xi, \Alphabet,
\mathcal{T})$ is known as the \emph{$\epsilon$-machine} \cite{Shal98a} of the
associated stochastic process. As described shortly, the $\epsilon$-machine is
a generator of its associated statistical field theory. It generates all words
with their corresponding probabilities. Similar to its topological counterpart,
unifilarity additionally elevates the \eM\ to a predictive presentation.
Prediction in the stochastic setting means identifying the predictive
distribution associated with a given past. And, by definition, an \eM's causal
states carry unique predictive distributions.

Summing over all symbol-labeled transitions produces a Markov transition
operator over the internal causal states: $T_\epsilon = \sum_{a \in \Alphabet}
T^a$. This operator evolves probability distributions over the causal states,
regardless of the symbols involved, and specifies an order-1 Markov process
over the states~\cite{Shal98a}. The left eigenvector of $T_\epsilon$ with unit
eigenvalue provides the \emph{stationary distribution} over causal states: $\pi
T_\epsilon = \pi$. For ergodic systems, as we consider here, $\pi$ is unique. 

We emphasize that the causal states identify a hidden, internal Markov process
underlying the non-Markovian process over the symbols in $\Alphabet$,
specified by a sofic measure on a sofic shift. This inverts the abstract
definition of sofic measures, given as factor maps on Markov
processes~\cite{Boyl09}. There, a shift-invariant measure obeying Kolmogorov
extension on a subshift of finite type produces a Markov chain of finite order
and sofic shifts are abstractly defined as factor maps of subshifts of finite
type. Hidden Markov chains are then given as the pushforward of the Markov
measure along the factor map. Here, in contrast, we start with a statistical
field theory supported on a sofic shift. The causal equivalence relation then
identifies the underlying causal states and the Markov process defined over
them.

With our given assumption of a stationary ergodic process over symbols we can
use the stationary distribution $\pi$ and the symbol-labeled transition
operators to directly extract the word probabilities $\Pr(\omega) =
\mu\bigl(C_0(\omega)\bigr)$ from the $\epsilon$-machine presentation. First,
single-symbol probabilities are given as:
\begin{align}
    \Pr(a) = \sum\limits_{\xi \in \Xi} \pi(\xi) \sum\limits_{\xi' \in \Xi}T^{a}_{\xi \xi'}
    ~. 
\end{align}
Where $\sum_{\xi'}T^a_{\xi \xi'}$ gives the probability of observing the symbol
$a \in \Alphabet$, conditioned on being in causal state $\xi$, and this is
summed over all states in $\Xi$ weighted by their stationary probabilities
$\pi$. We write this compactly as:
\begin{align}
    \Pr(a) = \langle \pi | T^a | \mathbf{1} \rangle
    ~,
\end{align}
where $\langle \pi |$ indicates $\pi$ as a row vector and $| \mathbf{1} \rangle$ is the column vector of all $1$s. 

The probability of a word $\omega = a_0a_1\ldots a_k$ is then:
\begin{align}
\Pr(a_0a_1\cdots a_k) &= \mu \bigl(C_0(a_0a_1\ldots a_k)\bigr) \nonumber \\ 
&= \langle \pi | T^{a_0} T^{a_1} \cdots T^{a_k} | \mathbf{1} \rangle
~.
\label{eq:word_prob_machine}
\end{align}
Recall that in the topological case, the semigroup algebra of the canonical
machine presentation is given through composition of the mappings $\mathcal{M}
= \{M_a\}$. Now, we see the same semigroup algebraic structure in the products
of the symbol-labeled transition matrices $\mathcal{T} = \{T^a\}$. In fact, the
topological structure, the set of mappings $\mathcal{M}$, is recovered from the
statistical structure, the set of transitions $\mathcal{T}$, by setting all
nonzero elements of each $T^a \in \mathcal{T}$ to unity and then applying
future equivalence. 

It must be emphasized that this last step, applying future equivalence, is
essential. There may be distinct $\epsilon$-machines---presentations of
statistical patterns---that are supported on the same sofic shift---topological
pattern. Said another way, statistical patterns signify distinct structure
supported on topological patterns. They are not merely ``adding probabilities
onto'' topological patterns. Formally, the symbol-labeled transition matrices
$\mathcal{T}$ can represent a different semigroup than that represented by
their topological counterparts $\mathcal{M}$ on which they are supported.
Appendix~\ref{app:statpatterns} illustrates this distinction.

Let us briefly turn to mention the quantitative benefits of having these
presentations. Using the word probabilities as just described, in addition to
the underlying pattern of the sofic shift that supports the ensemble, a wide
range of statistical properties of the sequence ensembles, such as correlation
functions, power spectral densities, and informational properties
\cite{Riec13a,Riec17a} can be directly determined via the \eM. For example, the
\emph{Shannon entropy rate} $\hmu$ of an ensemble measures its degree of
randomness and can be calculated as:
\begin{align*}
\hmu & = \lim_{\ell\rightarrow\infty} \frac{\H[\word{X}{0}{\ell}]}{\ell} \\
	& = \lim_{\ell\rightarrow\infty} \H[X_{0} | \word{X}{-\ell}{0} ]  \\
    & = - \sum_{\xi \in \Xi} \pi(\xi) \sum_{a \in \Alphabet}
	\sum_{\xi' \in \Xi} T^{(a)}_{\xi \xi'} \log_2 T^{(a)}_{\xi\xi'}
~,
\end{align*}
where $X_{0:\ell}$ for $\ell > 0$ is the random variable for the subsequence of
symbols $x_0 x_{1} \cdots x_{\ell-1}$. Moreover, the Shannon information in the
causal states measures a process' historical memory. For more nuanced
information measures of patterns in stochastic processes and spin systems, see
Refs. \cite{Crut01a,Feld02b,Feld08a,Crut08a,Crut10a,Jame11a,Crut13a,Ara14a}.

\section{Patterns in Spacetime}
The preceding demonstrated that sofic shifts and their semiautomaton
presentations provide a formulation of patterns in discrete one-dimensional
systems. And, it showed how they generalize to stochastic ensembles with
associated information-theoretic measures. With this, one can argue that the
theory of one-dimensional patterns in discrete stationary processes, augmented
with the cited extensions, is largely complete. Now, we turn to the question of
how to capture patterns in higher dimensions. While patterns in this new
setting are amenable to similar analysis, there are key differences, new
phenomena, and open problems.


Consider now the time evolution of symbols $\Alphabet$ on the sites of a
spatial lattice $\lattice$. A \emph{spacetime field} $\stfield{}{} \in
\MeasAlphabet^{\lattice \otimes \mathbb{Z}}$ is a time series $\state_0,
\state_1, \ldots$ of spatial configurations $\state_t \in
\MeasAlphabet^\lattice$. With a $d$-dimensional lattice $\lattice =
\mathbb{Z}^d$ a spacetime field $\stfield{}{}$ is an element of a
\emph{spacetime shift space} $\FullSTShift$ \cite{Schmi95a,Lind04a}. In what
follows, upper indices on field values are spatial coordinates (e.g., at time
$t$ site $i$ has value $\state^i$) and lower indices are time coordinates
(e.g., $\state_t$).

Multidimensional shift spaces are notoriously difficult to study, with many
simple properties being uncomputable \cite{Lind04a}. Similarly, while sofic
shifts and their canonical machine presentations provide a mathematical
formalism for patterns as generalized symmetries in one dimension, generalizing
to higher dimensions is not straightforward. Significantly, there is not a
unique generalization for finite-state machines and regular languages in higher
dimensions \cite{Lind98a}. If there were a unique generalization, we could use
the semiautomata special case as the mathematical representation of
high-dimensional patterns. 

As we argued, models that create a compressed representation of a system's
behavior must do so by harnessing patterns---predictive regularities---in the
system. Fully-discrete one-dimensional systems are ideal as there is a unique
minimal presentation of the system and, in this case, \emph{that
predictive presentation is the pattern}. The situation is more complicated in
higher dimensions. A conflict arises between useful generalized spacetime
symmetries and predictive presentations that faithfully generate fields in their
spacetime shift spaces.

We now outline the local spacetime generalization of predictive equivalence for
constructing presentations of spacetime patterns. As with finite-state machines
and regular languages, degeneracy is broken when moving to higher dimensions as
the spacetime generalization is not unique. In particular, we demonstrate that
the shape of local ``futures'' determines the algebraic properties of the
resulting local presentation.

\subsection{Local Spacetime Presentations}

Since an evolving spacetime field is a time series of spatial lattice
configurations, it can be interpreted as a one-dimensional shift space over an
exponentially-large alphabet (of lattice configurations). While formally well
defined, however, this perspective is an unwieldy basis for a mathematical
formulation of patterns in spacetime. First, for space and time translation
symmetries, we typically consider the idealized case of infinitely-large
spatial lattices. Thus, one must work with shift spaces over an
infinitely-large alphabet. Second, capturing patterns \emph{within} the spatial
configurations themselves means not treating an entire lattice configuration
simply as a single symbol, as done with the one-dimensional framing. There is
internal organization within the spatial configurations that is dynamically
and structurally important.

Therefore, we take a local approach to generalizing machine presentations of
spacetime shift spaces \cite{Shal03a, Rupe18a}. This is implemented by
identifying equivalence classes over \emph{local} pasts that have the same
future set of \emph{local} futures in spacetime. The latter can be either
topological (sets) or statistical (predictive distributions).

Motivated by the causal restrictions of local interactions, \emph{past
lightcones} in spacetime are the natural choice for local pasts when building
local presentations. Formally, the past lightcone $\plc$ of a site $\point$ in
spacetime is the set of all field values $\stpointprime$ at previous times 
$(t^\prime \leq t)$ that could influence $\stpoint$ through local interactions:
\begin{align}
\plc \point \equiv \big\{ \stpointprime :\; t' \leq t \; \mathrm{and} \;
|\site' - \site| \leq c(t - t') \big \}
  ~,
\label{eq:plc}
\end{align}
where $c$ is the finite speed of information propagation in the system.  We
include the \emph{present} field value $\stpoint$ in $\point$'s past lightcone,
but not in its local future.

Due to the richness and complication of multidimensional shift spaces, the
following explores only topological spacetime patterns. Paralleling the
one-dimensional development, local presentations of topological spacetime
patterns are defined through the local analog of the future equivalence
relations:
\begin{align}
\plc_i \sim_F \plc_j \iff F(\plc_i) = F(\plc_j)
  ~,
\label{eqn:localequiv}
\end{align}
where $F(\plc_i) = \{$Local futures co-occurring with $\plc_i\}$. We define
co-occurring local futures more precisely shortly, as there are alternatives.

Relation (\ref{eqn:localequiv})'s equivalence classes determine the internal
states for a local spacetime presentation. The generalization of symbol-labeled
transitions between the internal states is constructed in terms of spatial
$[\sigma^s(\stfield{}{})]_t^r = \stfield{t}{r+s}$ and temporal
$[\sigma_\tau(\stfield{}{})]_t^r=\stfield{t+\tau}{r}$ shift operators. (Note
that all space-time shift operators commute.) For a past lightcone $\plc\point$ at
spacetime site $\point$, let $\sigma^s_\tau\bigl(\plc\point\bigr)$ denote the
action of the shift operator on all of the field values in
$\plc\point$. That is, an entire lightcone is shifted analogously to an
individual spacetime site.

\begin{figure*}
\centering
\includegraphics[width = 0.9 \textwidth]{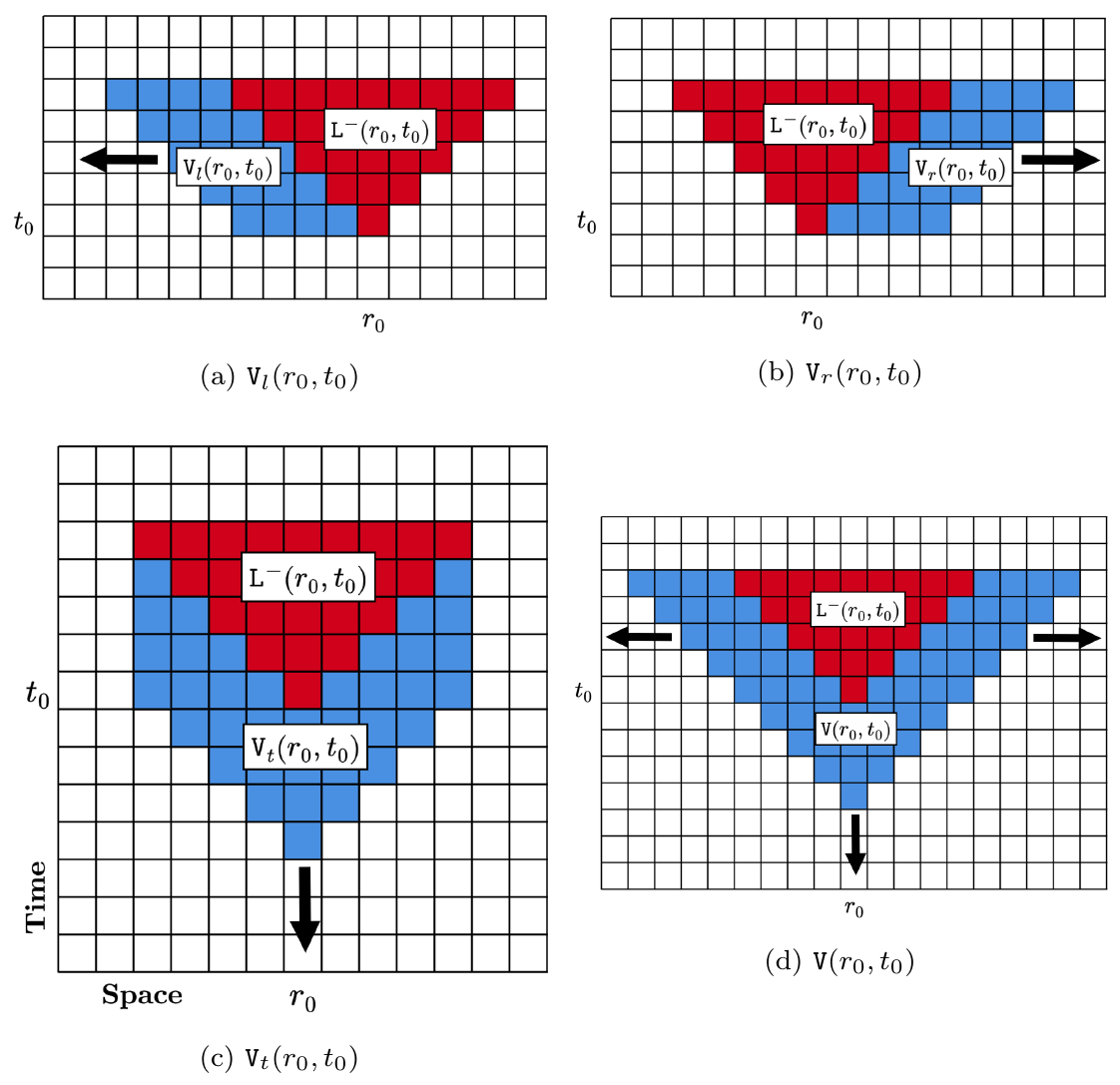}
\caption{Fringes induced by spacetime shifts: Co-occurring depth-4 past
	lightcone $\plc(r_0,t_0)$ (red) and depth-$4$ spacetime patches resulting
	from concatenations of (a) left transition fringes $V_\ell(r_0,t_0)$
	(blue), (b) right transition fringes $V_r(r_0,t_0)$ (blue), (c) forward
	transition fringes $V_t(r_0,t_0)$ (blue), and (d) unions of left, right,
	and forward transition fringes $V(r_0,t_0)$ (blue). Arrows indicate the
	direction(s) in which local spacetime patches may be generated with
	successive concatenations to the seed past lightcone $\plc(r_0,t_0)$. 
	}
\label{fig:Vs}
\end{figure*}

The right spatial transition \emph{fringe} is defined as the set difference
between $\plc\point$ and $\sigma^1\bigl(\plc\point\bigr)$. This is the
generalization of the ``symbol'' emitted during a rightwards move in one
spatial dimension. Similarly, the left spatial transition fringe is defined as
the set difference between $\plc\point$ and
$\sigma^{-1}\bigl(\plc\point\bigr)$, and the forward temporal transition fringe
is the set difference between $\plc\point$ and $\sigma_1\bigl(\plc\point\bigr)$
\cite[Fig. 4]{Shal03a}. For simplicity we consider spacetime fields with one
spatial dimension, but the generalization to higher dimensions is
straightforward. 

Time and space transition fringes form the appropriate alphabet to define local
spacetime presentations $(\Xi, \Alphabet, \mathcal{M})$, with $\Xi$ the set of
past lightcone equivalence classes, $\Alphabet$ the set of space and time
transition fringe, and $\mathcal{M}$ the fringe-labeled state transitions
\cite{Shal03a}. Briefly, a site $\point$ in spacetime has an associated
internal state $\xi\point$ that is the equivalence class of the past lightcone
$\plc\point$. Consider the neighboring site $(r+1, t)$, which similarly
has an associated internal state $\xi(r+1, t) = [\plc(r+1,t)]_F$. The right
transition fringe provides the missing information to construct $\plc(r+1,t)$
from $\plc\point$. Therefore, $\xi\point$ plus a right transition fringe
uniquely determines $\xi(r+1,t)$. That is, such local presentations are
\emph{unifilar}, and therefore have the requisite structure of predictive
presentations. Recall that predictive models are also generative. 

Generating words is rather straightforward for machine presentations in one
dimension, markedly less so for local spacetime presentations. The
fringe-labeled transitions establish local spacetime presentations $(\Xi,
\Alphabet, \mathcal{M})$ as local generative spacetime models. Local spacetime
patches, local ``words'', are generated via concatenation of transition fringe
``symbols''. Denote concatenations of right spatial fringes generally as
$\V_r$, to signify the particular shape of the resulting spacetime patches.
Similarly, let $\V_l$ be the spacetime patches resulting from concatenations of
left spatial fringes and $\V_t$ the patches from forward temporal fringes. 

Note that spacetime patches are more than merely collections of values from
spacetime sites. The configuration and space-time relation among the values
matter. This is why we refer to their \emph{shape}. The $\V_r$ patches are a
distinct form of local spacetime words apart from $\V_l$ and $\V_t$. Only
patches of the same shape may be concatenated together to extend patches.

Let $\V$ be the spacetime patches resulting from the union of forward, left,
and right transition fringes. Let \emph{depth} denote the number of
single-fringe symbols concatenated together in a particular spacetime patch
word. See Fig.~\ref{fig:Vs}.

Concatenating fringes into local spacetime patches has the same semigroup
algebra structure as in the one-dimensional setting. Similarly, the semigroup
algebra is captured by the local presentations through the action of the
fringe-labeled transitions $\mathcal{M}$, analogous to the one-dimensional
presentations $\Presentation(\ShiftSpace)$. We now examine possible choices of
local futures and the algebraic properties of the induced local presentations. 

\subsection{The Shape of Local Futures}

Recall that Relation (\ref{eqn:localequiv}) did not specify local futures when
defining local presentations. The future sets $F(\plc)$ of past lightcones
there were intentionally left ambiguous to allow for possible
alternatives---alternatives that arise to address several subtleties of
spacetime shifts.

Prior work assumed that the natural choice for a local future is a \emph{future
lightcone} $\flc$ \cite{Shal03a,Rupe18a,Rupe18b}. The latter is defined as all
field values at subsequent times that could possibly be influenced from the
given spacetime site $\stpoint$ through the local interactions:
\begin{align}
\flc \point \equiv \big \{ \stpointprime :\; t' > t \; \mathrm{and} \;
|\site' - \site| \leq c ( t' - t) \big \}
  ~.
\label{eq:flc}
\end{align}
Co-occurring past $(\plc)$ and future $(\flc)$ lightcones at spacetime point
$(\site_0, t_0)$ are depicted in Fig.~\ref{fig:lightcones} for a $1+1$
dimensional spacetime field with $c=1$. Local presentations constructed as
equivalence classes of past lightcones that have the same future set of future
lightcones are known as \emph{local causal states}.

\begin{figure}
\centering
\includegraphics[width = 0.48 \textwidth]{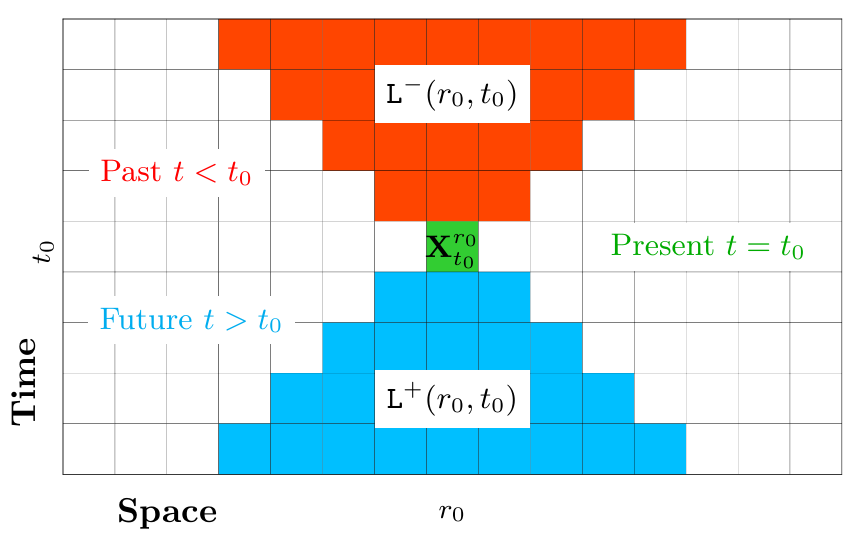}
\caption{Co-occurring past ($\plc$) and future ($\flc$) lightcones at a
	spacetime site $(r_0, t_0)$ in 1+1 dimensions with $c=1$.
	}
\label{fig:lightcones}
\end{figure}

Interestingly, the benefit of local causal-state models derives from
generalized symmetries \cite{Rupe18a,Rupe18b, Rupe19a}. However, this is a
spacetime symmetry distinct from the semigroup algebra of fringe
concatenations. While we describe these spacetime symmetries in more detail
below, we first demonstrate that such symmetries are at odds with the semigroup
algebra of fringe concatenations for local causal states. That is, while
employing lightcones as local futures allows one to discover useful spacetime
symmetries, the resulting local machine presentation $(\Xi, \Alphabet,
\mathcal{M})$ is \emph{not} a faithful generator of the underlying spacetime
shift space. Even though all local presentations possess the unifilarity
property of predictive models, the more basic generative ability falls short
unless the appropriate local futures are used. We again emphasize that
phenomena in higher dimensions are complicated---sometimes in counterintuitive
ways, as the following demonstrates.

Elementary cellular automata (ECA)---see App.~\ref{app:CA}---produce spacetime
shift spaces with nontrivial patterns and structure in their spacetime fields;
e.g., domains, particles, and particle interactions
\cite{Hans90a,Crut93a,Hans95a}. In addition, their one-dimensional
fully-discrete spatial lattices are shift spaces and so provide a link between
one-dimensional shifts and $1+1$ dimensional shifts of their spacetime fields. 

Following convention, denote the mapping from a past lightcone to its
equivalence class or, equivalently, to its associated local (causal) state
$\xi$, as $\epsilon(\plc) = [\plc]_F = \xi$. Crucially, this
provides a local pointwise mapping over any spacetime field $\stfield{}{}$.
Each site $\point$ has a unique past lightcone $\plc$ that is then mapped to
its local state via $\epsilon(\plc)$. We use this pointwise mapping
to transform a spacetime field $\stfield{}{}$ to an associated local-state
field $\causalfield{}{} = \epsilon(\stfield{}{})$ that shares the same
coordinate geometry as $\stfield{}{}$. Each site $\causalfield{r}{t}$ in the
local state field $\causalfield{}{}$ is the local state $\causalfield{r}{t} =
\xi = \epsilon(\plc)$ that is the image under the $\epsilon$-map of the
past lightcone $\plc$ at the spacetime point $\stpoint$. 

An example spacetime field $\stfield{}{}$ from ECA Rule 90 is shown in
Fig.~\ref{fig:rule90} consisting of black ($x_t^r = 1$) and white ($x_t^r = 0$)
squares. The associated local causal-state field $\causalfield{}{}$ is shown as
the overlaid colored letters, using lightcones as local futures. In the case of
Rule 90, there is a single local causal state, labeled as $\mathsf{A}$---all past lightcones are
future-lightcone equivalent.

This is consistent with prior findings \cite{Rupe18a,Rupe18b} that connect the
local causal states with the canonical machine presentations of one-dimensional
sofic shifts that represent \emph{invariant sets} of spatial configurations for
the ECA. The full $2$-shift is invariant under Rule 90, meaning there are no
forbidden words in the one-dimensional spatial configurations generated by Rule
90. All binary strings have pre-images under the global dynamic $\Phi$ of Rule
90. Recall that the canonical machine presentation of full shifts consists of a single internal state and, hence, we expect the single local causal-state for
the spacetime fields produced by Rule 90. 

\begin{figure*}
\centering
\includegraphics[width = 0.85 \textwidth, trim={5.0cm 2cm 5.0cm 2cm},clip]{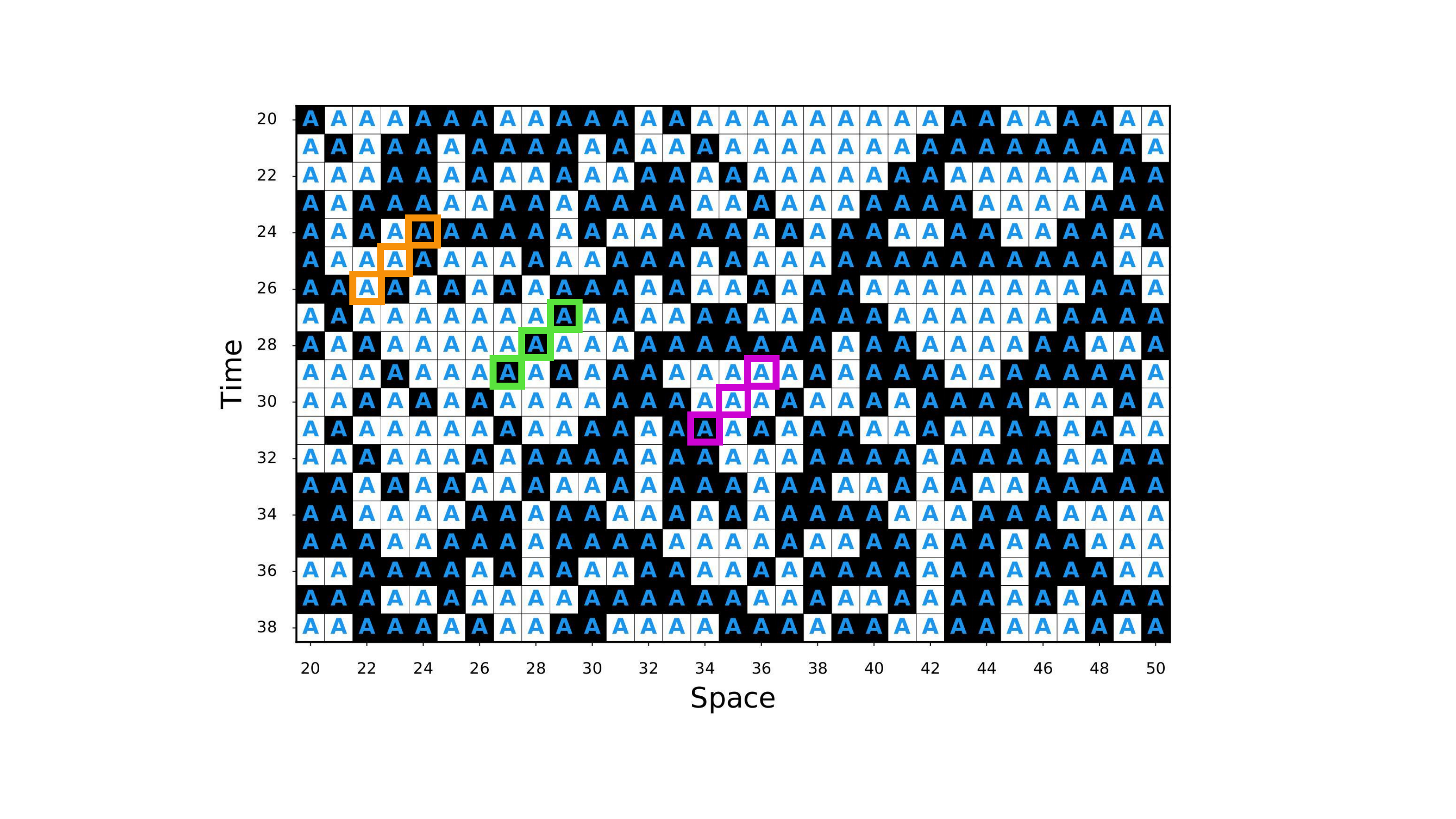}
\caption{ECA Rule 90 spacetime field depicted as white ($0$) and black ($1$)
	squares. The corresponding local causal-state field is overlaid with
	colored letters; simply the single causal state $A$. Three sample
	right-transition fringes for past lightcone depth-$2$ are highlighted in
	colored (orange, green, purple) boxes.
	}
\label{fig:rule90}
\end{figure*}

Even though there are no forbidden words in Rule 90's one-dimensional spatial
configurations, there certainly \emph{are} forbidden patches in
its spacetime fields. In particular, note that the ECA's local update rule 
$\phi$ corresponds to a spacetime patch in the shape of a depth-$1$ past
lightcone---the present spacetime site together with its local neighborhood one
time-step in the past. Therefore, any spacetime patch of the same shape that
is inconsistent with local update rule $\phi$ is forbidden in the spacetime
shift space of the ECA's spacetime fields. For example, the following patch is
allowed by Rule 90, since $\phi_{90}(010) = 0$:
\begin{align*}
\begin{matrix}
     0 & 1 & 0 \\
       & 0 &   
\end{matrix}
\end{align*}
Therefore: 
\begin{align*}
\begin{matrix}
     0 & 1 & 0 \\
       & 1 &   
\end{matrix}
\end{align*}
is a forbidden spacetime patch in Rule 90's spacetime shift space. 

From ECA Rule 90's single local causal-state it is easy to see that the local
causal-state machine presentation is not a consistent spacetime generator.
Since there is only one state, $\mathsf{A}$, all fringe-labeled
transitions occur from state $\mathsf{A}$ back to itself. Thus, all like-shaped fringe
symbols that occur may be concatenated together to form spacetime patches.
Figure~\ref{fig:rule90} highlights three right-transition fringes. If they are
concatenated in order of orange, green, magenta (left to right) they form the
depth-$3$ $\V_r$ patch:
\begin{align*}
\begin{matrix}
      &  & 1 & 1 & 0 \\
      & 0 & 1 & 0 &  \\
     0 & 1 & 1 &  &  
\end{matrix}
\end{align*}
This patch contains the spacetime word given above that is forbidden in Rule 90's spacetime shift. In fact, the other word of the same shape in this patch is also forbidden by Rule 90:
\begin{center}
    \begin{tabular}{c c c}
     1 & 1 & 0 \\
       & 0 &   
    \end{tabular}
    ~,
\end{center}
since $\phi_{90}(110) = 1$. 

Concatenations of left and forward fringes produce, respectively, $\V_l$ and
$V_t$ spacetime patches that similarly contain words forbidden by Rule 90's
spacetime shift space, as shown numerically in a supplementary Jupyter Notebook
\cite{rule90nb}. Similar results for another ECA shift space, the domain of
Rule 18 discussed in more detail below, are also given in a supplementary
Jupyter Notebook \cite{dom18nb}. Empirically, we have found that local causal
state presentations are generically not faithful generative models of their CA
spacetime shift spaces. 

Comparing Fig.~\ref{fig:lightcones} with the $\V_i$ patch shapes in
Fig.~\ref{fig:Vs}, there is little to no overlap between the ``futures''
(lightcones in this case) used to define the equivalence classes of past
lightcones and the spacetime patches generated from fringe-labeled transitions
between the equivalence classes. Therefore, it is not surprising that local
presentations created with future lightcones are not faithful generators of
their spacetime shifts. 

In one dimension, there is only one ``future'' that always follows from the
past. When translating forward, what once was a future becomes part of the
past. This is not necessarily the case for local pasts and futures in
spacetime. The predictive ability of unifilar models follows from this
succession of futures becoming pasts, when combined with future equivalence. As
we now see, future equivalence without the successional relation between local
pasts and futures yields models that, while unifilar, are not consistent
spacetime generators. 

Constructively, this insight points the way to creating local presentations
that \emph{are} faithful generators. As the future lightcones do not play a
role in generating local spacetime patches, they should not be used to define
past lightcone equivalence in Relation (\ref{eqn:localequiv}). Rather, we
should use spacetime patch shapes---denote them $\V_i$---that are to be
generated as our notion of a local future for defining future equivalence in
Relation (\ref{eqn:localequiv}). Local causal states are defined using the
future sets with future lightcones $F(\plc_i) = \{\flc \text{ co-occurring with
} \plc_i\}$. We can alternatively define local machine presentations whose
internal states are defined as equivalence classes from Relation
(\ref{eqn:localequiv}) using $F(\plc_i) = \{\V_i \text{ co-occurring with }
\plc_i\}$.

Indeed, as demonstrated computationally in the supplementary Jupyter Notebook
Ref. \cite{rule90nb}, using $\V_r$ shapes with $F(\plc_i) = \{\V_r \text{
co-occurring with } \plc_i\}$ yields a local machine presentation that is a
faithful generator of $\V_r$ spacetime patches. However, this presentation is
\emph{not} a faithful generator of $\V_l$ or $\V_t$ spacetime patches. Similar
to the use of future lightcones, local presentations can only generate faithful
spacetime patches in the shape used to define $F(\plc)$. Therefore, as also
demonstrated in Ref. \cite{rule90nb}, using $\V_l$ to define $F(\plc)$ results
in local presentations that faithfully generate $\V_l$, but not $\V_r$ nor
$\V_t$. Similar for $\V_t$. 

This motivates the definition of $\V$ in Fig.~\ref{fig:Vs} (d) as the union of
$\V_r$, $\V_l$, and $\V_t$. Local presentations defined using $F(\plc_i) = \{\V
\text{ co-occurring with } \plc_i\}$ are faithful generators of spacetime
patches in \emph{all} directions emanating from a seed past lightcone. As the
resulting presentations are also unifilar, additionally they are consistent
predictive models of local spacetime patches. These $\V$ presentations can thus
be seen as the most natural generalization of the predictive canonical machine
presentations $\Presentation(\ShiftSpace)$ for one-dimensional sofic shifts.
Moreover, they possess the same semigroup algebra of ``symbol'' (fringe)
concatenation that allows the generation and prediction of arbitrarily-long
``words''---arbitrarily large spacetime patches. 

That said, as the following now details, local causal-state presentations that
use future lightcones to define $F(\plc)$ possess interesting and useful
generalized spacetime symmetries that are lost when using $\V$ presentations. 

\subsection{Generalized Spacetime Symmetries} 

Recall from above that machine transitions in one-dimension determine the
structural relations among internal states and that they are driven by the
shift operator. We just saw that the analogous fringe-labeled transition
structure in spacetime is similarly related to spacetime shift operators
through the definition of the fringes. However, the spacetime shift operators
provide structural algebraic relations among local states \emph{independent of}
fringe symbols and their concatenated spacetime words. This is due to the local
pointwise $\epsilon$-map and the resulting shared coordinate geometry between
spacetime fields $\stfield{}{}$ and corresponding local state fields
$\causalfield{}{} = \epsilon(\stfield{}{})$.

To see this, consider two space-adjacent sites in a spacetime field,
$\stfield{t}{r}$ and $\stfield{t}{r+1} = \sigma^1 \stfield{t}{r}$. Under the
$\epsilon$-map, this produces $\xi_i = \causalfield{t}{r}$ and $\xi_j =
\causalfield{t}{r+1} = \sigma^1 \xi_i$. As past lightcones are defined solely
in terms of spacetime distances, they are equivariant under spacetime
isometries. In the present setting these are translations. 


For example, if a spacetime field $\stfield{}{}$ has \emph{exact} time and
space translation symmetries $\sigma_\tau\stfield{}{} = \stfield{}{}$ and
$\sigma^s\stfield{}{} = \stfield{}{}$, for some $s$ and $\tau$, then the
corresponding local state field $\causalfield{}{} = \epsilon(\stfield{}{})$
shares these symmetries: $\sigma_\tau\causalfield{}{} = \causalfield{}{}$ and
$\sigma^s\causalfield{}{} = \causalfield{}{}$. This is because the spacetime
shift operators act equivalently on lightcones, so that $\sigma^s\plc_i =
\plc_i$ and $\sigma_\tau\plc_i = \plc_i$. Therefore,
$\epsilon\bigl(\sigma^s\plc_i\bigr) = \epsilon(\plc_i)$ and
$\epsilon\bigl(\sigma_\tau\plc_i\bigr) = \epsilon(\plc_i)$. Note that this
argument is independent of which local future shape is used, as they all define
a local $\epsilon$-map on past lightcones. 

When future lightcones specifically are used as local futures, we previously
showed (i) there are spacetime fields $\stfield{}{}$ that do not have
translation symmetries, but (ii) the associated local causal state field
$\causalfield{}{} = \epsilon(\stfield{}{})$ \emph{does} \cite{Rupe18a,Rupe18b}.
These are the spacetime generalizations of partial and hidden symmetries; cf.
Fig. \ref{fig:machines}.

For the cellular automata examples in Refs. \cite{Rupe18a,Rupe18b} and shown
here in Fig.~\ref{fig:lcsfields}, generalized symmetries---exact, partial, and
hidden---in the spacetime fields are generated by the evolution of invariant
one-dimensional sofic shifts (spatial configurations) under the CA dynamic. Such
spacetime regions are known as \emph{domains} \cite{Hans90a,Crut93a,Hans95a}.
Interestingly, exact symmetry domains are generated from the evolution of exact
symmetry sofic shifts. Moreover, stochastic partial symmetry domains are
generated from stochastic partial symmetry sofic shifts and hidden symmetry
domains from hidden symmetry shifts. Examples of each of these cases are shown
in Fig.~\ref{fig:lcsfields} as the generalizations of the one-dimensional cases
in Fig~\ref{fig:machines}. Appendix~\ref{app:domain_machines} displays the presentations $\Presentation(\ShiftSpace)$ for each sofic shift $\ShiftSpace$ used to generate the fields in Fig.~\ref{fig:lcsfields}. 

\begin{figure*}
\centering
\includegraphics[width = 0.8 \textwidth]{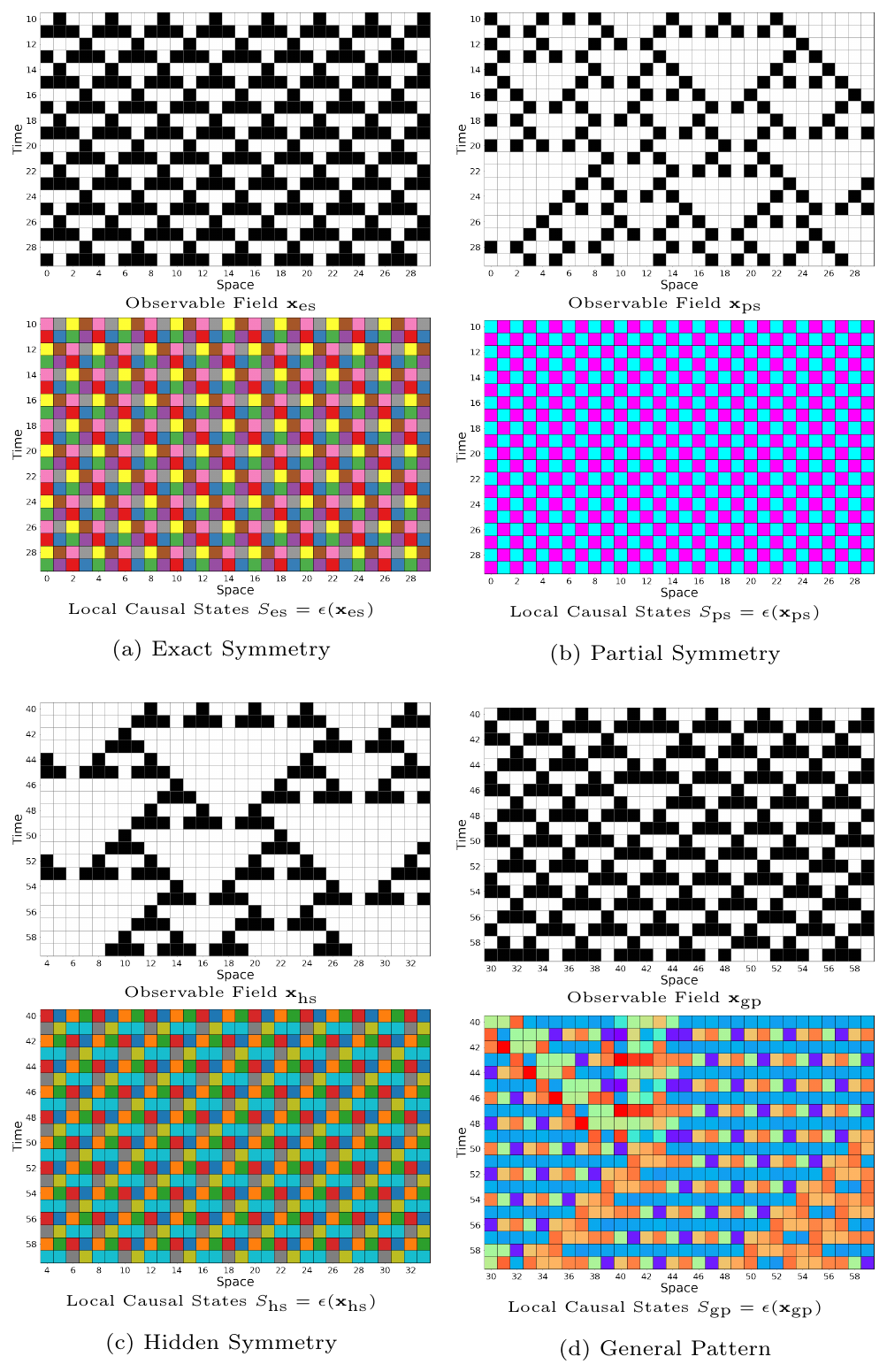}
\caption{Spacetime pattern classes: Spacetime fields $\stfield{}{}$ (above) and
	corresponding local causal state fields $\causalfield{}{} =
	\epsilon(\stfield{}{})$ (below) for (a) an exact symmetry (ECA Rule 54
	domain), (b) a partial symmetry (ECA Rule 18 domain), (c) a hidden symmetry
	(ECA Rule 22 domain), and (d) a general pattern (ECA Rule 54 evolving
	random initial configuration).
	}
\label{fig:lcsfields}
\end{figure*}

\begin{figure*}
\centering
\includegraphics[width = 0.8 \textwidth]{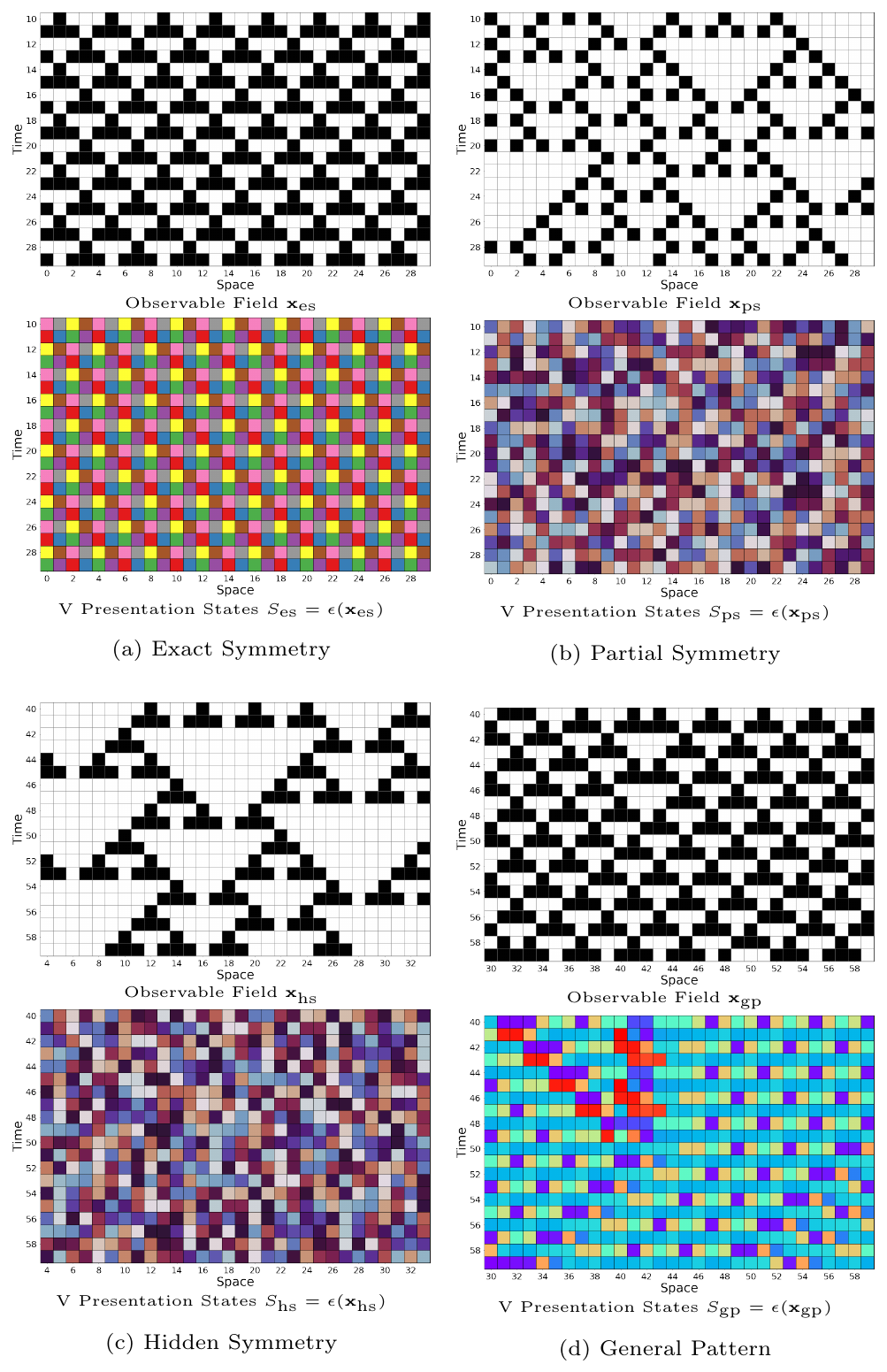}
\caption{Spacetime fields $\stfield{}{}$ shown in Fig.~\ref{fig:lcsfields}
	(above) and corresponding V state fields $\causalfield{}{} =
	\epsilon(\stfield{}{})$ (below) for (a) an exact symmetry, (b) a partial
	symmetry, (c) a hidden symmetry, and (d) a general pattern.
	}
\label{fig:Vfields}
\end{figure*}

Exact symmetries are straightforward, as just described. There is a finite $s$
and $\tau$ such that for \emph{every} site $\point$ in spacetime we have that
$\sigma_\tau\stpoint = \stpoint$ and $\sigma^s\stpoint = \stpoint$. Due to the
shared coordinate geometry and isometry equivariance of past lightcones, this
applies for the local causal state field as well: $\sigma_\tau\causalpoint =
\causalpoint$ and $\sigma^s\causalpoint = \causalpoint$. The exact symmetry
field shown in Fig.~\ref{fig:lcsfields}~(a) is a sample of the domain of ECA
Rule 54. As can be seen, both the spacetime fields and corresponding local
state fields have a period-$4$ translation symmetry in both time and space. 

For partial symmetries, some, but not all, of the spacetime coordinates return
to themselves after fixed translation, just as in one dimension. For example,
there are some sites $(r^*, t^*)$ such that $\stfield{t*}{r* + s} = \sigma^s
\stfield{t*}{r*} = \stfield{t*}{r*}$, but this does not hold for all $\point$.
As in one-dimension, a special case of partial symmetries are stochastic
symmetries that become symmetric after a wildcard substitution. For example,
consider a 1+1 dimensional spacetime field with $\Alphabet = \{0, 1\}$ such
that it has a checkerboard layout with $0$s on the black squares and wildcards
(either $0$ or $1$) on the white squares. The partial (stochastic) symmetry
example shown in Fig. \ref{fig:lcsfields} (b) is ECA Rule 18's domain, which is
a strict subset (subshift) of the $0$-wildcard checkerboard shift space. We
examine these two spacetime shift spaces in more detail below.

Unlike in one-dimension, hidden symmetries in higher dimensions correspond to
exact symmetries in the local causal-state field for spacetime fields that
themselves have no symmetries, exact or partial. The hidden symmetry spacetime
fields in Figs. \ref{fig:lcsfields} (c) and \ref{fig:Vfields} (c) are samples
of ECA Rule 22's domain and their structure is manifestly harder to detect from
visual inspection.

What is most important to note here is that the observable field $\stfield{}{}$
does not have any space or time translation symmetries, exact or partial. The
corresponding local causal-state field $\causalfield{}{} =
\epsilon(\stfield{}{})$ (shown in Fig.~\ref{fig:lcsfields}), though, does have
symmetries that are period-$4$ in both time and space. Essentially, there are
motifs that appear in $\stfield{}{}$---such as, the black ``triangles'' that
occur with a local period-$4$ structure (e.g., $1110$ in space and $1100$ in
time). However, in contrast to a stochastic symmetry field, there is no global
symmetry that captures how these stochastic motifs occur. The global hidden
symmetry is more complicated than can be revealed by simple wildcard
substitutions. It is uncovered, though, by the local causal-state field. This
structure, and its relation to the local causal states, is thoroughly detailed
in Ref. \cite{Rupe18b}. 

A general spacetime pattern then is one for which the local causal-state field
does not exhibit spacetime symmetries. There is still an algebraic relation among the local states from the spacetime shift operators, but they do not correspond to spacetime symmetries.

Note that the example in Fig.~\ref{fig:lcsfields} (d) has regions that are
\emph{locally} symmetric in both $\stfield{}{}$ and $\causalfield{}{}$.
However, this symmetry is globally broken by localized defects or coherent
structures \cite{Rupe18a}. Figure~\ref{fig:lcsfields} (d) is produced from from
ECA Rule 54 evolving a random initial configuration. The local symmetry regions
are instances Rule 54's domain---the exact symmetry field shown in
Fig.~\ref{fig:lcsfields} (a).

We emphasize that, with the exception of exact symmetries, these generalized
spacetime symmetries are empirically observed \emph{only} with local
causal-state presentations that employ future lightcone shapes. For example,
the corresponding local state fields of presentations using $\V$ local futures
is shown in Fig.~\ref{fig:Vfields} with the same spacetime fields as
Fig.~\ref{fig:lcsfields}. As expected, the exact symmetry case also has
space-time translation symmetries in the local state field in (a). However,
there are no space-time symmetries present in the local state fields for
partial (b) or hidden (c) symmetry spacetime fields when $\V$ local futures are
employed.

Altogether these observations portray a somewhat unsatisfactory scenario. On the
one hand, lightcone local futures and the resulting local causal-state models
produce insightful and predictive generalized symmetries in spacetime.
Moreover, these clearly connect to the one-dimensional sofic shifts of CA
spatial configurations, as demonstrated in Refs. \cite{Rupe18a,Rupe18b}.
However, we just demonstrated that local causal-state models are unfaithful
generators of their underlying CA shift spaces.

On the other hand, we also showed how to create faithful local generators using
$\V$ future shapes. These presentations thus appear to be the most natural
local generalization of the canonical machine presentations in one dimension.
In this case, though, the useful spacetime generalized symmetries are lost with
these generative models. 

Let us now examine in more detail the trade-offs between these two local
presentations by diving deeper into the case of stochastic symmetries. 

\section{Case Study: Stochastic Symmetries}
\label{sec:stochsym}

The partial (stochastic) symmetry spacetime fields in Figs. \ref{fig:lcsfields}
(b) and \ref{fig:Vfields} (b) are generated from ECA Rule 18 evolving a string
from its domain's invariant sofic shift. Appendix~\ref{app:domain_machines}
shows that these are points in the stochastic symmetry sofic shift with the
form $0$-$\Sigma$, where $\Sigma$ is a wildcard that can be either $0$ or $1$.

It is easy to see from Rule 18's lookup table $\phi_{18}$ that the wildcard
locations oscillate each time step between even- and odd-indexed lattice sites.
Thus, the spacetime field can be interpreted as a checkerboard pattern of fixed
$0$s and wildcards $\Sigma = \{0,1\}$; giving the checkerboard pattern seen in
the local causal-state field shown in Fig.~\ref{fig:lcsfields} (b). In this
way, the two local causal states can be interpreted as a ``fixed'' $0$ state
and a wildcard state. The local states of the $\V$ machine presentation,
though, do not occur in a checkerboard pattern in Fig.~\ref{fig:Vfields} (b).
And so, they clearly cannot be assigned such semantic labels. 

While the fixed-$0$ and wildcard semantic labels are appealing for the two
local causal states, they are also misleading. For a given spatial
configuration, the fixed-$0$ and wildcard semantics are appropriate, as it
describes the (invariant) one-dimensional sofic shift of the spatial
configurations. However, it is again easy to see from the Rule 18 lookup table
that the wildcard semantics can no longer be assigned to the full spacetime
field. Specifically, in spacetime the local update rule $\phi_{18}$ forbids
certain spacetime patches of the form:
\begin{align*}
\begin{matrix}
     \Sigma & 0 & \Sigma \\
       & \Sigma &   
\end{matrix}
\end{align*}
Therefore, the spacetime shift space of Rule 18's domain is a proper subset
(subshift) of the $0$-$\Sigma$ checkerboard shift space for which all
realizations of the above spacetime patch are allowed. 

Figure~\ref{fig:0Wfields} demonstrates that both the local causal states and
the $\V$ machine presentation local states reveal a checkerboard symmetry for
spacetime fields in the $0$-$\Sigma$ shift space. In this case, \emph{both}
types of local states certainly \emph{do} carry the semantics of fixed-$0$ and
wildcard $\Sigma$ in spacetime.

\begin{figure*}
\centering
\includegraphics[width = 0.8 \textwidth]{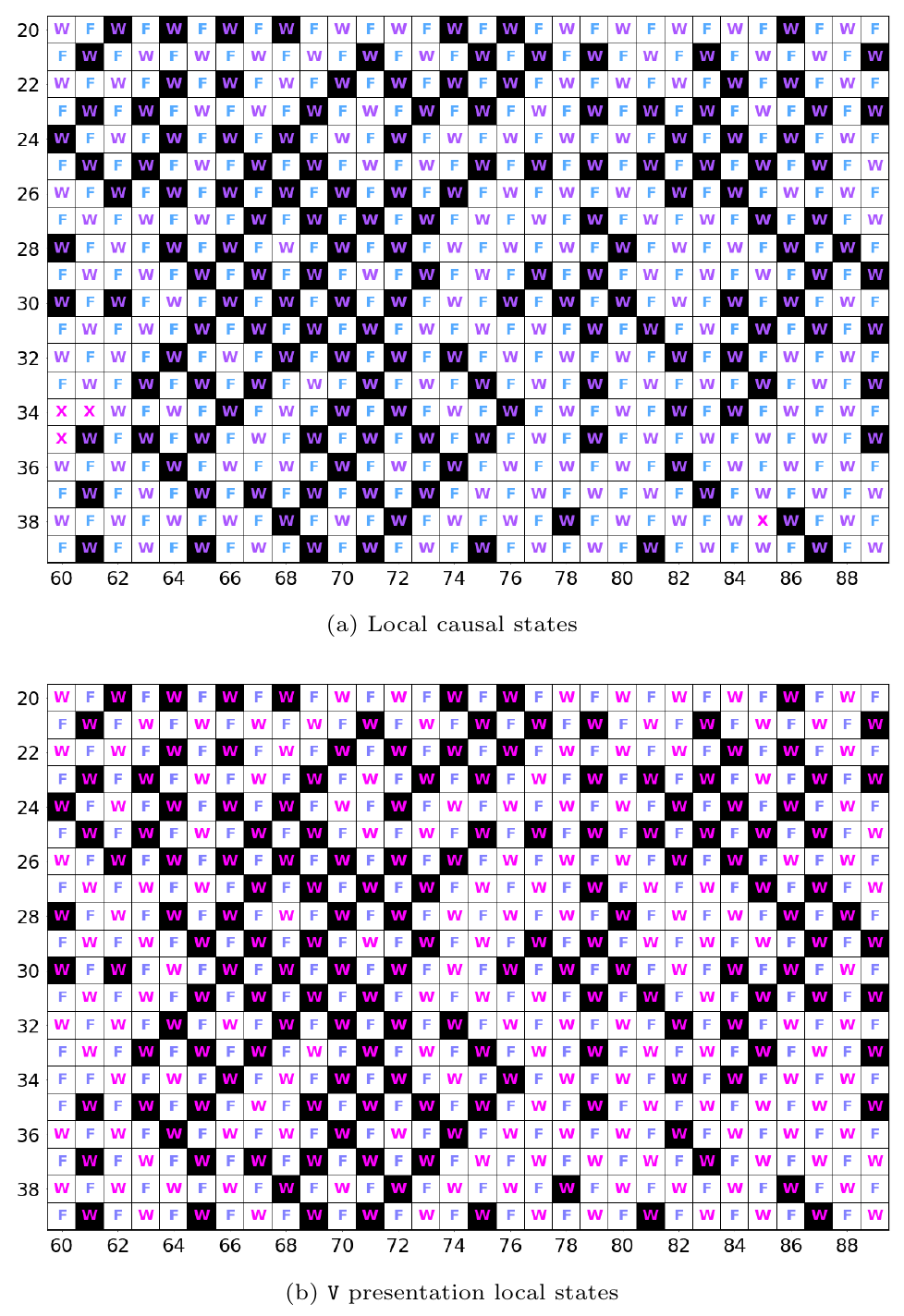}
\caption{Sample field from $0$-Wildcard shift space in black (1) and white (0)
	squares with (a) local causal states and (b) $\V$ presentation local states
	(b) overlaid. In both cases, the local states can be assigned fixed-$0$ and
	wildcard semantics. And so, they are labeled $\mathsf{F}$ and $\mathsf{W}$,
	respectively. The local causal states in (a) have an additional
	``indeterminate'' state assigned to the all-$0$ past lightcone, labeled as
	$\mathsf{X}$; see, for example, the field at $r=60$ and $t=34$). 
	}
\label{fig:0Wfields}
\end{figure*}

What does this say then about Rule 18's domain in the context of the
alternative local-state presentations in Figs. \ref{fig:lcsfields} (b) and
\ref{fig:Vfields} (b)?

First, it is interesting and not entirely clear why the choice of lightcone
local futures produces local causal states that carry the strictly-spatial
semantics of fixed-$0$ and wildcard $\Sigma$. Especially so, when we know the
wildcard semantics do not hold in spacetime. Whatever the reason, it is key to
interpreting spacetime patterns through the local causal-state fields,
including identifying coherent structures as locally-broken generalized
symmetries in spacetime \cite{Rupe18a}. 

Second, this implies that the $\V$ machine presentation states are a more
nuanced representation of the spacetime pattern of Rule 18's domain. From
above, we know the non-fixed-$0$ sites cannot be strictly interpreted as
wildcards in spacetime. However, we can interpret these as
contextually-constrained wildcards. From the $0$-$\Sigma$ patch above, the
outcome of the bottom $\Sigma$ is constrained by $\phi_{18}$ and the outcomes
of the preceding $\Sigma$s. Propagating these constraints through space and
time clearly becomes complicated very quickly. However, this is a more
appropriate semantic interpretation of Rule 18's domain. And so, this seems to
be to what the $\V$ machine states correspond. As shown in a supplementary
Jupyter Notebook \cite{genpatnb}, the local state field displayed in
Fig.~\ref{fig:Vfields} (b) is reconstructed from past lightcones of depth-$8$
and $\V$s of depth-$3$. This produces $767$ local $\V$ states to
(approximately) capture the spacetime pattern of Rule 18's domain. This is in
contrast to the three local causal states reconstructed with the same past and
future depths. (For finite-depth past lightcones, there is a third
``indeterminate'' state for the all-$0$ past lightcone.) 

Note that the fixed-$0$ semantics still holds in spacetime and that this is not
captured by the $\V$ presentation states. Given that $\V$ machine presentations
are constructed to be faithful local generators of their spacetime shift
spaces, it seems there is a similar spacetime contextuality to the fixed-$0$
sites that is necessary for faithful local generation. The spacetime
contextuality present in the $\V$ presentation states is absent in the local
causal states. And, it is the latter that reveals the spacetime symmetries
observed in Fig.~\ref{fig:lcsfields}.  

Taken all together, the spacetime patterns of partial and hidden symmetry ECA
spacetime shift spaces are exceedingly complicated. Local causal-state
presentations constructed from future lightcones and the $\V$ machine
presentations capture different aspects of these patterns. One the one hand,
$\V$ machine presentations are faithful generators and predictors of the
spacetime patterns and so capture the patterns in a more direct connection with
the one-dimensional case. However, local causal states capture generalized
spacetime symmetries that, in the case of ECA domain shift spaces, closely
connect with the (evolution of) the one-dimensional shift spaces of the
invariant spatial configurations that, in turn, possess the corresponding
generalized symmetries.



\section{Conclusions}
Building on Refs.~\cite{Crut88a} and \cite{Shal98a}'s foundations, there is a
growing body of results that use machine presentations defined from future or
predictive equivalence to discover inherent, often hidden, pattern and
structure. This includes pattern and structure in disordered crystals
\cite{Varn14a}, thermodynamic environments \cite{Boyd15a}, atmospheric
turbulence \cite{Palm00a}, bacteria behavior \cite{marz18a}, controlled quantum
processes \cite{vene20a}, and more.

The first half of the preceding development synthesized the arguments for sofic
shifts and their machine presentations as mathematical formulations of pattern
and structure. In particular, the manner in which they generalize the perfect
regularity of exact symmetries and the associated group algebra has been
rigorously clarified. It also connected to recent results on sofic measures and
their relation to stochastic \eM\ presentations of statistical patterns
supported on sofic shifts.

The development's second half overviewed the local approach to spacetime
machine presentations. We showed that the standard local causal-state approach
that uses lightcones as local futures reveals useful generalized spacetime
symmetries. However, this comes at a cost: it does not lead to faithful
generative models of the underlying shift space. This motivated introducing an
alternative local presentation model using spacetime $\V$ shapes as local
futures. We showed that these presentations are faithful generative models and
that they do not possess the same generalized spacetime symmetries of the local
causal states. The seeming mutual-exclusion of multiple local presentations
is emblematic of the difficulty of high-dimensional shift spaces. The novel
constructions given here---in particular the generative local presentations
using $\V$ futures---may provide new paths of inquiry for investigating the
organization of these rich and challenging spaces. 

One advantage of the nongenerative approach using local predictive equivalence
over future lightcones is that it does not rely on a finite alphabet for
labeled transitions to provide the algebraic structure among local causal
states. Thus, local causal states are well-defined and can be
algorithmically approximated for continuum field theories. For example, they
have been used to extract coherent structures in complex fluid flows
\cite{Rupe19a}. The results presented here provide a theoretical underpinning
and a ``physics of organization'' behind these unsupervised physics-informed machine learning
algorithms. 

Supporting Python code and supplementary Jupyter Notebooks can be found at \url{https://github.com/adamrupe/ca_patterns}.

\section*{Acknowledgments}

The authors thanks Mikhael Semaan for helpful comments and discussions. The authors also thank the Telluride Science Research Center for hospitality during
visits and the participants of the Information Engines Workshops there. 
AR acknowledges the support of the U.S. Department of Energy
through the LANL/LDRD Program and the Center for Nonlinear Studies.
JPC acknowledges the kind hospitality of the Santa Fe Institute, Institute for
Advanced Study at the University of Amsterdam, and California Institute of
Technology for their hospitality during visits. This material is based upon
work supported by, or in part by, FQXi Grant number FQXi-RFP-IPW-1902
and U.S. Army Research Laboratory and the U.S. Army Research Office under
grants W911NF-21-1-0048 and W911NF-18-1-0028.


\appendix

\section{Examples and Constructions}
\label{app:constructions}

To clarify the development and build intuition about patterns, as we use the
term, the following works through several examples in detail. Specifically, we
show explicitly how to construct a semigroup $G$ for exact and generalized
symmetries, how to construct a semiautomaton presenting $G$ (and thus
$\ShiftSpace$), and how this general semiautomaton simplifies to the future
cover, whose irreducible component is the minimal presentation
$\Presentation(\ShiftSpace)$.

\subsection{Exact Symmetry Shifts}

The simplest class of exact symmetry strings to characterize as sofic shifts
are the \emph{k-clock shifts} for which $\Alphabet = \{0, 1, \dots, (k-1)\}$
and $b = 01\cdots(k-1)$. For example, points in the $3$-clock shift are of the
form $\cdots012012012012\cdots$ with $b=012$. Intuitively, this is the simplest
case since each $a \in \Alphabet$ fully specifies the period of the translation
symmetry. Recall that this corresponds to the internal states of
$\Presentation(\ShiftSpace)$, which are the equivalence classes $[ \cdot ]_F$.
For $k$-clock sequences each internal state (excluding the forbidden state),
and thus equivalence class, is represented by a generator $a \in \Alphabet$.
This fully specifies $\ShiftSpace$ since there are no transient states. Each $a
\in \Alphabet$ is \emph{synchronizing}. In terms of the defining semigroup of
$\ShiftSpace$ for $k$-clock shift, $G$ is consists solely of $\Alphabet$ plus
the absorbing element $e$. The $3$-clock shift is given by $G = \{0, 1, 2, e\}$
with $01=1$, $12=2$, $20 = 0$, $0^2 = 1^2 = 2^2  = e$, and $02 = 10 = 21 = e$.

For more general exact symmetry shifts there are additional elements in $G$
beyond the generators $\Alphabet \cup \{e\}$. And, the future cover will have
transient states; i.e., there are additional equivalence class $[ \cdot ]_F$
beyond those in the minimal presentation $\Presentation(\ShiftSpace)$. To
illustrate, consider the shift $\ShiftSpace$ with $b=001$ and points of the
form $\cdots 001001001 \cdots$.

A simple construction of a finite $G$ for an exact symmetry shift $\ShiftSpace$
is as follows. Start by constructing the asymptotic recurrent component of the
presenting semiautomaton---this is $\Presentation(\ShiftSpace)$. We already
showed the states of $\Presentation(\ShiftSpace)$ are the equivalence classes
$[b]_F$, $[bb_1]_F$, $[bb_1b_2]_F$, and so on. These equivalence classes can generally be represented by the allowed words of length $p-1$ so that concatenation with a generator gives a (shift of) the tiling block $b$. This represents shifts of windows of length $p-1$ on points $x \in \ShiftSpace$. 

\begin{figure*}
\centering
\includegraphics[width = 0.9 \textwidth]{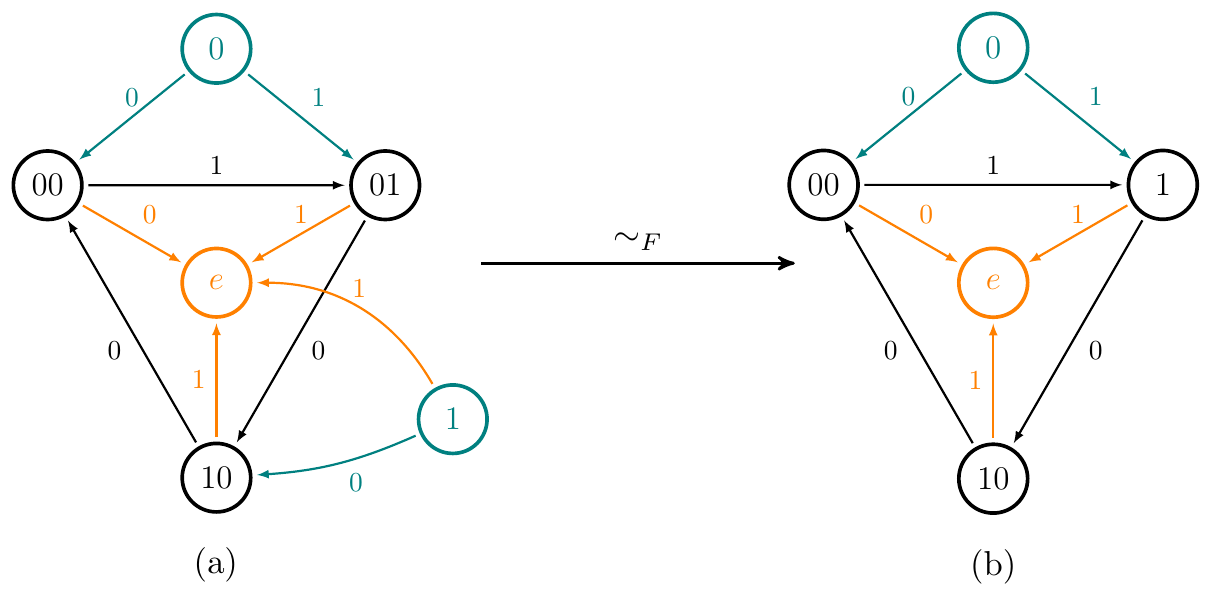}
\caption{Semiautomaton presentations for straightforward semigroup construction (a) and its simplification under the future cover equivalence relation (b).}
\label{fig:exactsym}
\end{figure*}

For our example with $b=001$, these words are $00$, $01$, and $10$. The word
$11$ is not included because it is forbidden in $\ShiftSpace$, and so
$11=1^2=e$ in $G$. If our $p-1$ window is on $00$ in $\ShiftSpace$, then a unit
shift reveals the generator $1$ and the window now shows $01$. This gives the
production rule $00 \cdot 1 = 001=01$ in $G$. In the semiautomaton presentation
there are states $\xi_{00} = [00]_F$ and $\xi_{01} = [01]_F$ with
$M_1(\xi_{00}) = \xi_{01}$. If we again shift the window on $01$, we reveal the
generator $0$ and the window now shows $10$, giving the production rule $010 =
10$. To complete the cycle we have $100 = 00$, giving the finite closure for
$G$. The rest of the elements in $G$ can be filled in with the free semigroup:
e.g., $0 \cdot 1 = 01$. Therefore, we can give a finite $G$ for our example as
$G = \{0, 1, e, 00, 01, 10\}$ with production rules $001 = 01, 010 = 10, 100 =
00$ and $ 1^2 = 0^3 = 101 = e$. The semiautomaton presentation for this $G$ is
shown in Fig.~\ref{fig:exactsym} (a). Teal colored states are transient, orange
is the absorbing forbidden state $\xi_e$, and black states are the recurrent
component (excluding $\xi_e$). Again, self-loops on $\xi_e$ are omitted for
visual clarity.

While this straightforward construct always produces a finite semigroup $G$
for a given exact symmetry shift $\ShiftSpace$, it is not necessarily a minimal
semigroup. (Or, equivalently, not a minimal presenting semiautomaton.) The
future cover equivalence relation exploits additional structure in
$\ShiftSpace$ to give a minimal description and may thus reduce and simplify
the straightforward $G$ and its presenting automaton. The $k$-clock shifts are
the extreme example, since we only need the generators as the elements of $G$
since each $a \in \Alphabet$ is synchronizing. In the $b=001$ example, while
$0$ is not synchronizing, $1$ is. Applying the future cover equivalence
relations exploits this to simplify $G$ and its presenting semiautomaton. From
visual inspection of Fig.~\ref{fig:exactsym} (a)'s semiautomaton we see that
$\xi_1$ and $\xi_{01}$ are equivalent, since their transitions lead to the same
states with the same labeled edges. Applying the future cover equivalence
relation gives the future cover, with its reduced semiautomaton presentation
shown in Fig.~\ref{fig:exactsym} (b). The simplified semigroup $G$ of the
future cover is given as $G = \{0,1, e, 00, 10\}$ with $01 = 1$ and $100=00$,
and the same production rules for forbidden words.

Since we designed our straightforward construction around the asymptotic cycle
of the translation symmetry, the recurrent components in both cases are
isomorphic and represent the canonical presentation
$\Presentation(\ShiftSpace)$ that captures the symmetry algebra.  

\subsection{General Pattern: the Even Shift}
\label{app:EvenShift}

To contrast with exact symmetry shifts, we now go through the well-studied Even
Shift \cite{Weis73,Kitc86a}. Recall that the Even Shift is the set of sequences
that have even-length blocks of $1$s bounded by $0$s. Thus, it is defined by
the set of irreducible forbidden words $\Forbidden = \{010, 01^30, 01^50,
\dots\}$. Since $\Forbidden$ is not finite, the Even Shift is not of finite
type---it is \emph{strictly sofic}. However, since it is sofic it can still be
finitely defined in terms of a finite semigroup $G$. Following Ref.
\cite{Kitc86a} we use $G = \{0, 1, e, 01, 10, 11, 101 \}$ with production rules
$0^2 = 0$, $1^3=1$, $01^2 = 1^20=0$ and $010=e$. 

The production rule $010=e$ represents the shortest forbidden word and the
other rules allow for all the other forbidden rules to reduce to $010$. For
example, we use $1^3=1$ to reduce $01110$ to $010$ which then maps to the
absorbing element $e$. We can see that while there is a countably-infinite
number of forbidden words in the Even Shift, there is structure in these
forbidden words that can be captured in a finite semigroup $G$. 

As with our explicit symmetry example $b=001$, the presenting semiautomaton for
the Even Process using the $G$ above is not minimal. Shown in
Fig.~\ref{fig:Even}, we see that there are two recurrent components
$\Presentation_A(\ShiftSpace)$ and $\Presentation_B(\ShiftSpace)$. While these
components correspond to different elements in $G$, we can see that they are
again isomorphic and so collapse together under $\sim_F$. The resulting
recurrent component is the canonical machine presentation
$\Presentation(\ShiftSpace)$, shown in Fig.~\ref{fig:machines} (c). 

\begin{figure}
\centering
\includegraphics[width = 0.5 \textwidth]{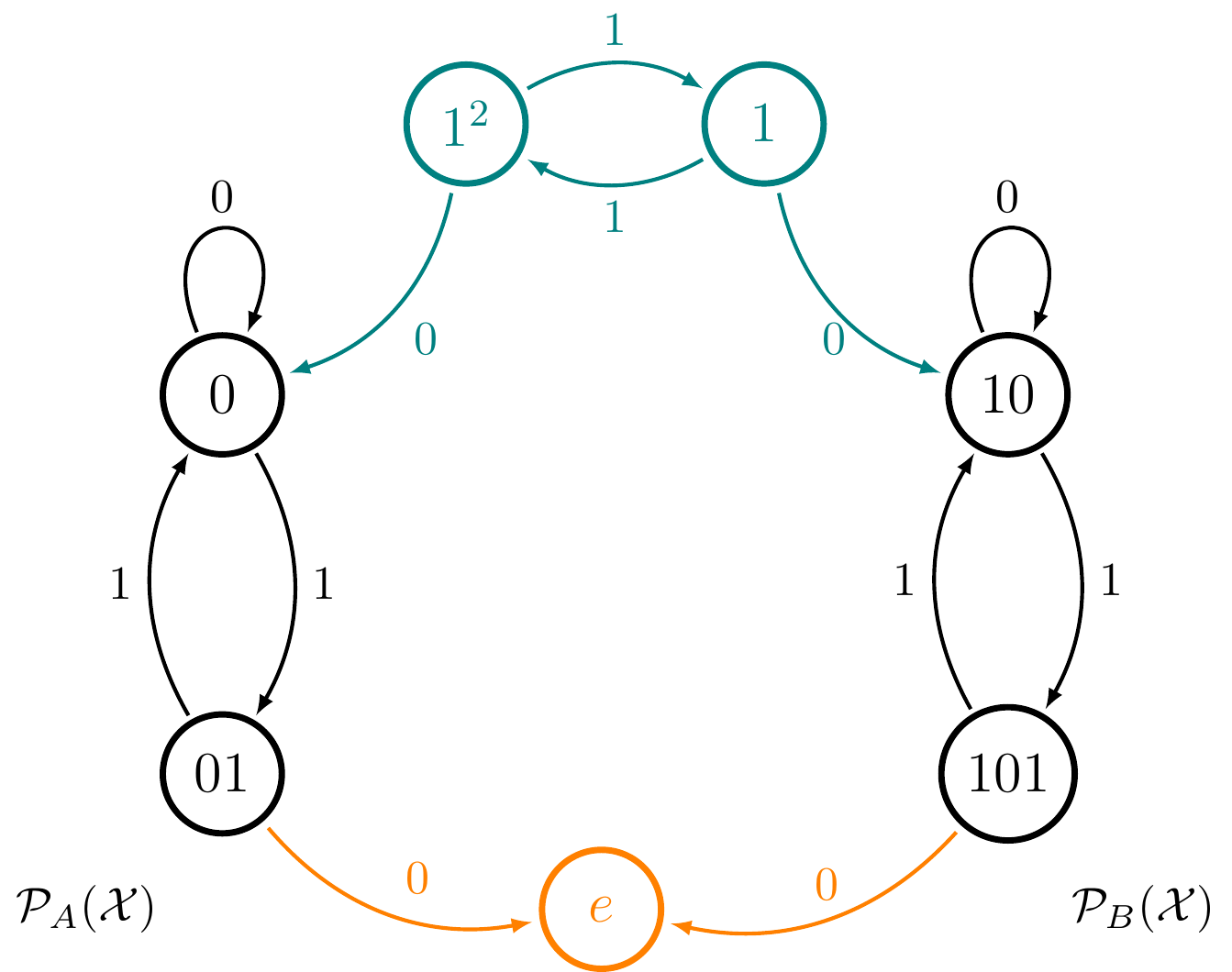}
\caption{Presenting semiautomaton for the Even Shift using
	$G = \{0,1,e,01,10,11,101\}$ and $0^2 = 0$, $1^3=1$, $01^2 = 1^20=0$,
	$010=e$. The two components $\Presentation_A(\ShiftSpace)$ and
	$\Presentation_B(\ShiftSpace)$ are isomorphic and thus collapse together
	under $\sim_F$. The recurrent component is the canonical machine
	presentation $\Presentation(\ShiftSpace)$, shown in Fig.~\ref{fig:machines}
	(c).
	}
\label{fig:Even}
\end{figure}

Though we discussed and described three forms of presenting semiautomata, we
want to emphasize the importance of the canonical presentation
$\Presentation(\ShiftSpace)$ as the mathematical representation of pattern as
generalized symmetry. There may be many semigroups $G$ that describe a given
sofic shift $\ShiftSpace$, and so there are many different semiautomata that
can present $\ShiftSpace$. However, for irreducible sofic shifts the recurrent
components of all such $G$ will be isomorphic \cite{Kitc86a}, which is why the
future cover is guaranteed to have a single recurrent component. Therefore the
future cover \emph{is} a unique representation of the structure of
$\ShiftSpace$. Patterns and the symmetries they generalize are asymptotic
properties and so the transients of the future cover are not of
interest for our current purposes. This is why the canonical presentation
$\Presentation(\ShiftSpace)$---the recurrent component of the future cover---is
the unique mathematical representation of patterns. 

Recall that the symmetry group of a translation symmetry is captured by the
canonical presentation, whose semiautomaton is a circle graph, neglecting the
absorbing state and its transitions. Similarly, we can see the canonical
presentation of the Even Shift captures the essential details of the pattern.
From inspecting its edge-labeled graph, shown in Fig.~\ref{fig:machines} (c),
we can see that an arbitrary number of $0$s are allowed from state $\xi_A$, but
once a $1$ occurs it must be followed by another $1$, ensuring the pattern of
an even number of $1$s bounded by $0$s. This also highlights the partial
regularity that motivated our definition of patterns. Starting in state $\xi_A$
there is a coin flip, either $0$ or $1$ may occur. If it is a $0$, repeat.
However, if it is a $1$, there is now additional regularity and structure that
enforces another $1$ to follow. 

\section{Distinct Statistical Patterns Supported on the Same Sofic Shift Topological Pattern}
\label{app:statpatterns}

The following illustrates how statistical patterns, in the form of sofic
measures, are additional structure on top of topological patterns. In
particular, there can be complex statistical structure supported on simple
topological structure. The topological pattern in this example is the full-$2$
shift---the set of all binary strings. As described above, the full-$2$ shift
represents a ``null'' pattern, in the sense that there is no predictability
leveraged from knowing the pattern. This is captured quantitatively by the
single-state canonical machine presentation $\Presentation(\ShiftSpace)$. Its
memory---the log of the number of states---is zero. 

The simplest statistical patterns supported on the full-$2$ shift is given by
assigning IID single-symbol probabilities, e.g., $\Pr(1) = 0.3$ and $\Pr(0) =
0.7$. The simplicity of this statistical pattern is also captured by a
single-state $\epsilon$-machine, where the above probabilities are assigned to
the correspond transitions from the state back to itself on the given symbol.
In this example it is easy to see the statistical pattern is supported on the
full-$2$ shift. Simply remove the probabilities from the single-state
transitions and the single-state canonical machine presentation of the full-$2$
shift is recovered. Since the \eM\ (statistical) and canonical machine
presentation (topological) have the same number of the states, the statistical
and topological patterns they capture can be thought of as comparable. 

\begin{figure}
\centering
\includegraphics[width = 0.45 \textwidth]{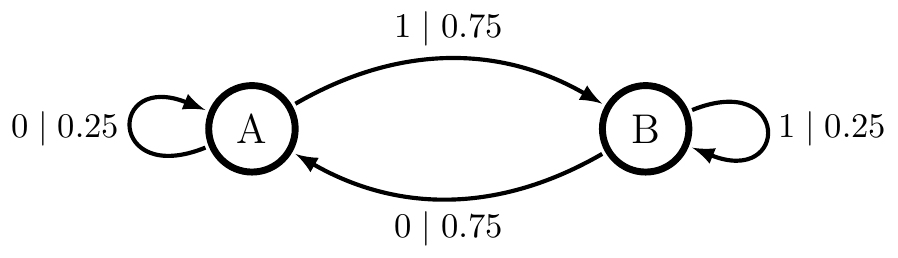}
\caption{Two state stochastic \eM\ presentation for a statistical pattern
	supported on the full-$2$ shift.
	}
\label{fig:twostatefull}
\end{figure}

However, consider the \eM\ shown in Fig.~\ref{fig:twostatefull}. Its two states
signify a more complex statistical pattern with nonzero memory. From the
machine diagram, we can see that the single-symbol probabilities depend on
which of the two internal causal states $A$ and $B$ the process is in. For
example, the probability of seeing a $1$ is $0.75$ if in causal state $A$ and
$0.25$ if in $B$. Note also that the symbol $0$ always leads to causal state
$A$ and $1$ always leads to $B$. Thus, single symbols are \emph{synchronizing}
in this example. This stochastic process is an order-$1$ Markov process. The
simple one-state case above is an IID (order-$0$ Markov) process, by contrast. 

It is perhaps clear from Fig.~\ref{fig:twostatefull}'s machine diagram
that this statistical pattern is supported on the full-$2$ shift, since both
symbols have positive probability from each of the two states. However, it is
instructive to show this using the symbol-labeled transition matrices. Recall
that $T^a_{ij}$ gives the probability of transitioning from state $\xi_i$ to
$\xi_j$ on the symbol $a \in \Alphabet$. For
Fig.~\ref{fig:twostatefull}'s \eM\ we have:
\begin{align*}
    T^0 = 
    \begin{pmatrix}
    0.25 & 0.0 \\
    0.75 & 0.0
    \end{pmatrix}
    ~,
\end{align*}
and 
\begin{align*}
    T^1 = 
    \begin{pmatrix}
    0.0 & 0.75 \\
    0.0 & 0.25
    \end{pmatrix}
    ~,
\end{align*}
where state $A$ is given index $1$ and $B$ is index $2$, so that $T^a_{12}$ is the probability of transitioning from $A$ to $B$. 

Matrix representations $M^a$ of the topological transition maps are given by
setting non-zero elements of $T^a$ to unity. In this example, we have:
\begin{align*}
    M^0 = 
    \begin{pmatrix}
    1 & 0 \\
    1 & 0
    \end{pmatrix}
    ~,
\end{align*}
and 
\begin{align*}
    M^1 = 
    \begin{pmatrix}
    0 & 1 \\
    0 & 1
    \end{pmatrix}
    ~.
\end{align*}
Recall that $M^a_{ij} = 1$ signifies a transition from internal state $i$ to
state $j$ is allowed on the symbol $a$ and $M^a_{ij} = 0$ is a forbidden
transition that produces a forbidden word. 

To see that these topological transitions correspond to the full-$2$ shift, note that the action of $M^a$ on both internal states yields the same output state, for both $M^0$ and $M^1$. That is, for both states, call them also $A= (1 \; 0)$ and $B = (0 \; 1)$, a transition on a $0$ goes to state $A$ and a transition on a $1$ goes to $B$. For example: 
\begin{align*}
    (0 \; 1) 
    \begin{pmatrix}
    1 & 0 \\
    1 & 0
    \end{pmatrix}
    = (1 \; 0)
    ~,
\end{align*}
and 
\begin{align*}
    (1 \; 0) 
    \begin{pmatrix}
    1 & 0 \\
    1 & 0
    \end{pmatrix}
    = (1 \; 0)
    ~.
\end{align*}
Thus, as described above, the two states are topologically equivalent $A \sim_F
B$ and so reduce to the single-state canonical machine presentation of the
full-$2$ shift.

Finally, as shown in Fig.~\ref{fig:fourstatefull} this construction of
probabilistically-distinct causal states supported on the full-$2$ shift can be
extended to \eM\ with a larger number of states. Clearly, the construction can
be extended indefinitely with an arbitrary number of causal states $\xi_i$ such
that $\Pr(0 | \xi_i) = p$ and $\Pr(1 | \xi_i) = 1-p$, since there can be
uncountably-many $p$. This shows that there can be arbitrarily complex
statistical patterns supported on simple topological patterns. 

\begin{figure}
\centering
\includegraphics[width = 0.45 \textwidth]{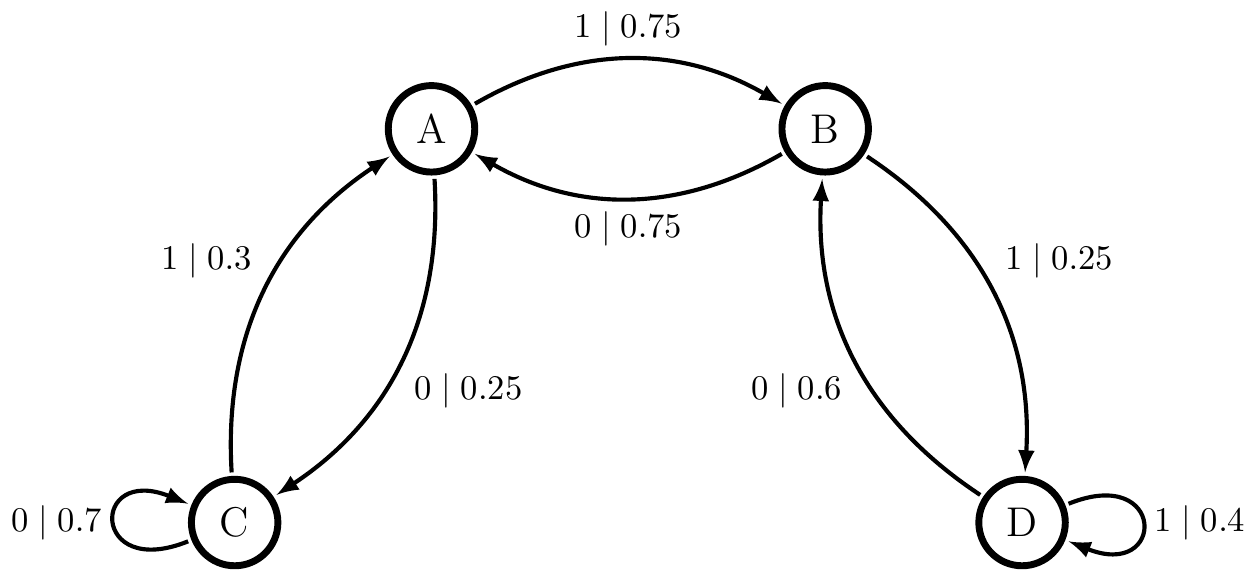}
\caption{Four-state stochastic $\epsilon$-machine presentation for a
	statistical pattern supported on the full-$2$ shift.
	}
\label{fig:fourstatefull}
\end{figure}

\section{Cellular automata}
\label{app:CA}

A one-dimensional \emph{cellular automaton} or CA $(\MeasAlphabet^{\lattice},
\Phi)$ consists of a spatial lattice $\lattice  = \mathbb{Z}$ whose
\emph{sites} take values from a finite \emph{alphabet} $\MeasAlphabet$. A CA
\emph{state} $\state  \in \MeasAlphabet^{\mathbb{Z}}$ is the configuration of
all site values $\state^r \in \MeasAlphabet$ on the lattice. (For states
$\state$, subscripts denote time; superscripts sites.) CA states evolve in
discrete time steps according to the \emph{global evolution} $\Phi: \FullShift
\rightarrow \ShiftSpace \subseteq \FullShift$, where:
\begin{align*}
\state_{t+1} = \Phi(\state_t)
~.
\end{align*}
$\Phi$ is implemented through parallel, synchronous application of a \emph{local update rule} $\phi$ that evolves individual sites $\state_t^r$ based on their radius $\radius$ \emph{neighborhoods} $\neighborhood(\state^r) =$\\$ \{ \state^{r'} \; : \;  |r-r'| < \radius \}$:
\begin{align*}
\state_{t+1}^r = \phi \big(\; \neighborhood(\state_t^r)\; \big)
  ~.
\end{align*}

Stacking the states in a CA \emph{orbit} $\state_{0:t} = \{\state_0, \state_1,
\ldots, \state_{t-1}\}$ in time-order produces a \emph{spacetime field}
$\stfield{0:t}{} \in \MeasAlphabet^{\mathbb{Z} \otimes \mathbb{Z}}$.
Visualizing CA orbits as spacetime fields reveals the fascinating patterns and
localized structures that CAs produce and how the patterns and structures
evolve and interact over time.

\subsection{Elementary cellular automata}

The parameters $(\MeasAlphabet, R)$ define a CA \emph{class}. One simple but
nontrivial class is that of the so-called \emph{elementary} cellular automata
(ECAs) \cite{Wolf83} with a binary local alphabet $\MeasAlphabet = \{0,1\}$ and
radius $\radius = 1$ local-interaction neighborhood $\neighborhood(\state_t^r) =
\state_t^{r-1} \state_t^r \state_t ^{r+1}$. Due to their definitional
simplicity and wide study, we explore ECAs in our examples.

A local update rule $\phi$ is generally specified through a \emph{lookup table}
that enumerates all possible neighborhood configurations $\neighborhood$ and
their outputs $\phi( \neighborhood)$. The lookup table for ECAs is given as:
\begin{align*}
\begin{tabular}{c c c | c}
\multicolumn{3}{c|}{$\eta$} & $O_\eta = \phi(\eta)$ \\
\hline
1 & 1 & 1 & $O_7$\\
1 & 1 & 0 & $O_6$ \\
1 & 0 & 1 & $O_5$ \\
1 & 0 & 0 & $O_4$ \\
0 & 1 & 1 & $O_3$ \\
0 & 1 & 0 & $O_2$ \\
0 & 0 & 1 & $O_1$ \\
0 & 0 & 0 & $O_0$
\end{tabular}
  ~,
\end{align*}
where each output $O_\eta = \phi(\eta) \in \MeasAlphabet$ and the $\eta$s are
listed in lexicographical order. There are $2^8 = 256$ possible ECA lookup
tables, as specified by the possible strings of output bits: $O_7 O_6 O_5 O_4
O_3 O_2 O_1 O_0$. A specific ECA lookup table is often referred to as an ECA
\emph{rule} with a \emph{rule number} given as the binary integer $o_7 o_6 o_5
o_4 o_3 o_2 o_1 o_0 \in [0,255]$. For example, ECA 172's lookup table has
output bit string $10101100$.

The $n^{\mathrm{th}}$-order lookup
table $\phi^n$ maps the radius $n \cdot \radius$ neighborhood of a
site to that site's value $n$ time steps in the future. Said another way,
a spacetime site $\state_{t+n}^r$ is completely determined by the radius
$ n\cdot \radius$ neighborhood $n$ time-steps in the past according
to:
\begin{align*}
\state_{t+n}^r
  = \phi^n\big (\neighborhood^{n} (\state_t^r)\big)
  ~.
\end{align*}
The \emph{depth-$n$ past lightcone} is the collection of all $\phi^{\ell}$ for
$1 \leq \ell \leq n$, plus the present site value itself (i.e., ``depth-0'').

\section{ECA Domain Sofic Shifts}
\label{app:domain_machines}
The generalization of invariant sets from low-dimensional dynamical systems to
high-dimensional cellular automata are known as \emph{domains}
\cite{Hans90a,Crut93a,Hans95a}. These are sets of spatial configurations that
are invariant under the global CA dynamic $\Phi$. Spatial configurations of
one-dimensional cellular automata are strings of symbols from an alphabet
$\Alphabet$. And, thus, shift spaces are natural choices for sets of spatial
configurations. In fact, the Curtis-Hedlund-Lyndon theorem shows that a CA's
dynamic---a sliding-block code---naturally induces shift-invariance in the set
of images under $\Phi$ \cite{Hedl69a}. Said another way, the dynamic of a CA on
a shift space maps to a shift space. 

For a CA $\Phi$, a \emph{domain} $\Lambda = \{\Lambda_1, \Lambda_2, \ldots,
\Lambda_p\}$ is a set of irreducible sofic shifts such that $\Phi(\Lambda_i)
\in \Lambda$. That is, if $x$ is a spatial configuration that is a point in one
of the sofic shifts $\Lambda_i \in \Lambda$, then the image $\Phi(x)$ is also a
point in one of the sofic shifts in $\Lambda$. As irreducible sofic shifts,
domains may possess patterns that are exact symmetries, partial symmetries,
hidden symmetries, or general patterns with no symmetries. Empirically, the
spacetime fields produced from the evolution of ECA domains possess the same
type of pattern as their invariant sofic shifts. The generalized spacetime
symmetries are revealed by the local causal states, as shown above in
Fig.~\ref{fig:lcsfields}. Here, we provide the machine presentations for the
domain sofic shifts used in each case.

\begin{figure}
\centering
\includegraphics[width = 0.45 \textwidth]{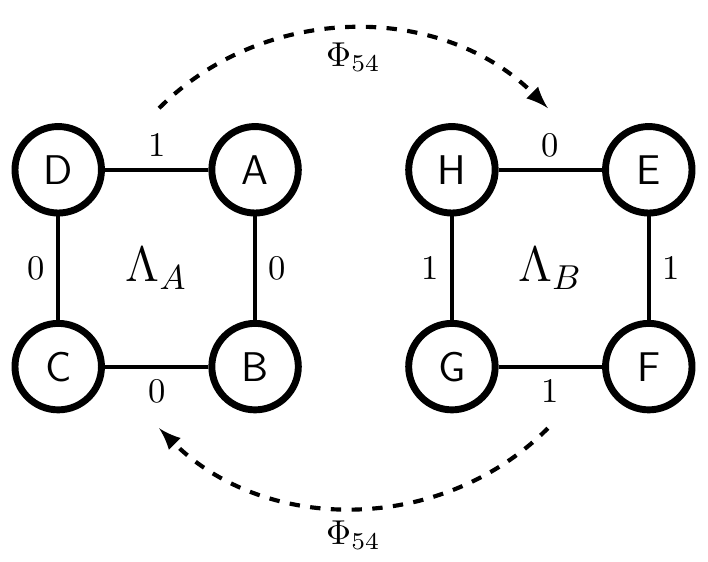}
\caption{Presenting semiautomaton for the domain of ECA Rule 54.}
\label{fig:domain54}
\end{figure}

\begin{figure}
\centering
\includegraphics[width = 0.45 \textwidth]{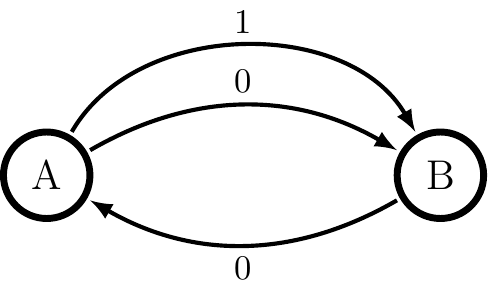}
\caption{Presenting semiautomaton for the domain of ECA Rule 18.}
\label{fig:domain18}
\end{figure}

The exact symmetry spacetime field is given by the evolution of ECA Rule 54's
domain, whose machine presentation is shown in Fig.~\ref{fig:domain54}. As can
be seen, there are two phase $\Lambda_A$ and $\Lambda_B$ that are cycled
between under $\Phi_{54}$. Each phase has an exact symmetry, with $\Lambda_{A}$
tiled by blocks of $0001$ and $\Lambda_B$ tiled by blocks of $1110$. As seen in
Fig.~\ref{fig:lcsfields} (a), this creates an exact symmetry spacetime field
that is period-$4$ in both time and space. 

The partial symmetry spacetime field possesses a stochastic symmetry and is
given by the evolution of ECA Rule 18's domain. Its machine presentation is
shown in Fig.~\ref{fig:domain18}. This is a single stochastic symmetry sofic
shift with a period-$2$ tiling of $0\Sigma$, where again $\Sigma$ is a wildcard
that can be either $0$ or $1$. Spacetime fields generated by the domain of Rule
18, as shown in Fig.~\ref{fig:lcsfields} (b), form a subshift of the
$0$-$\Sigma$ checkerboard spacetime shift space. Both are discussed in detail
above in the stochastic symmetry Case Study of Section~\ref{sec:stochsym}. 

\begin{figure}
\centering
\includegraphics[width = 0.45 \textwidth]{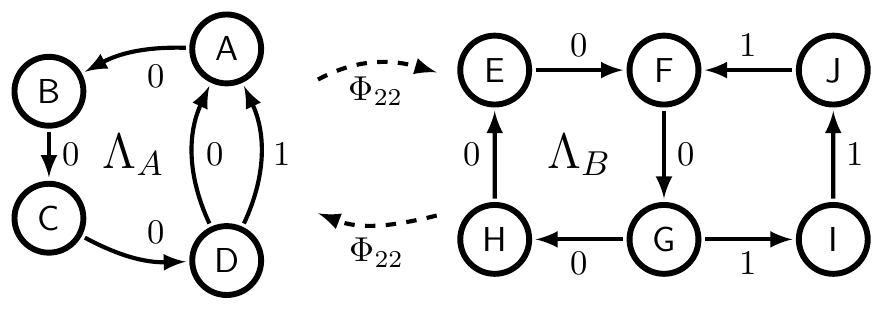}
\caption{Presenting semiautomaton for the domain of ECA Rule 22.}
\label{fig:domain22}
\end{figure}

Finally, the hidden symmetry spacetime field in Fig.~\ref{fig:lcsfields} (c) is generated by the hidden symmetry domain sofic shift of ECA Rule 22. The presenting automaton is shown in Fig.~\ref{fig:domain22}. Like Rule 54's domain, the domain of ECA Rule 22 comes in two phases. As can be seen, phase $\Lambda_{A}$ is actually a stochastic symmetry, as each state in $\Presentation(\Lambda_{A})$ returns to itself after four translations. However, this is not the case for phase $\Lambda_B$, which is a hidden symmetry sofic shift. This is because only states $\mathsf{F}$ and $\mathsf{G}$ of $\Presentation(\Lambda_B)$ return to themselves after four translations. As drawn, one can imagine ``folding'' $\Presentation(\Lambda_B)$ up onto itself to create a period-$4$ symmetry in the states. Said another way, one can think of a ``length-$3$ wildcard'' that produces either three $0$s or three $1$s. The $\Lambda_B$ sofic shift is then given by tilings of this length-$3$ wildcard followed by a fixed $0$. This is the nature of hidden symmetries. Blocks of three $0$s and three $1$s are forced to occur, followed by a fixed $0$, but the sequence of these blocks is not constrained. For example, the string of all $0$s and the string of all $1$s are both in $\Lambda_B$.

\bibliography{chaos,spacetime}

\end{document}